\newcommand{\R}{{\cal R}}
\newcommand{\OO}{{\cal O}}
\newcommand{\KK}{\mathcal{K}}
\newcommand{\LL}{\mathcal{L}}
\newcommand{\FF}{\mathcal{F}}
\newcommand{\HH}{\mathcal{H}}
\newcommand{\MM}{\mathcal{M}}
\newcommand{\GG}{\mathcal{G}}
\newcommand{\VV}{\mathcal{V}}
\newcommand{\SSS}{\mathcal{S}}
\newcommand{\CON}{\mbox{Con}}
\newcommand{\JJ}{\mathcal{J}}
\newcommand{\REL}{\mathbf{REL}}
\newcommand{\PSR}{\mathbf{PSR}}
\newtheorem{theorem}[figure]{Theorem}
\newtheorem{definition}[figure]{Definition}
\newtheorem{property}[figure]{Property}
\newtheorem{lemma}[figure]{Lemma}
\newtheorem{observation}[figure]{Observation}
\title{A Polynomial Time Algorithm for Finding Area-Universal
\\ Rectangular Layouts}
\date{}
\author{
Jiun-Jie Wang\\
State University of New York  at Buffalo, Buffalo, NY 14260, USA.\\
Email: jiunjiew@buffalo.edu
}
\begin{document}
\maketitle

\begin{abstract}
A rectangular layout $\LL$ is a rectangle partitioned into
disjoint smaller rectangles so that no four smaller rectangles meet at
the same point. Rectangular layouts were originally used as floorplans in VLSI design
to represent VLSI chip layouts. More recently, they are used in graph
drawing as rectangular cartograms. In these applications, an area
$a(r)$ is assigned to each rectangle $r$, and the actual area of
$r$ in $\LL$ is required to be $a(r)$. Moreover, some applications
require that we use combinatorially equivalent rectangular layouts
to represent multiple area assignment functions. $\LL$ is called
{\em area-universal} if any area assignment to its rectangles can be
realized by a layout that is combinatorially equivalent to $\LL$.

A basic question in this area is to determine if a given plane graph $G$
has an area-universal rectangular layout or not. A fixed-parameter-tractable
algorithm for solving this problem was obtained in \cite{EMSV12}.
Their algorithm takes $O(2^{O(K^2)}n^{O(1)})$ time (where $K$ is the
maximum number of degree 4 vertices in any minimal separation component),
which is exponential time in general case. It is an open problem to find a
true polynomial time algorithm for solving this problem.
In this paper, we describe such a polynomial time algorithm.
Our algorithm is based on new studies of properties of area-universal
layouts. The polynomial run time is achieved by exploring
their connections to the {\em regular edge labeling} construction.

\end{abstract}


\section{Introduction}\label{sec:Intro-universal}
A {\em rectangular layout} $\LL$ is a partition of a rectangle $R$
into a set $R(\LL)=\{ r_1, \ldots , r_n\}$ of disjoint smaller
rectangles by vertical and horizontal line segments so that no
four smaller rectangles meet at the same point.
An {\em area assignment function} of a rectangular layout $\LL$
is a function $a: R(\LL) \rightarrow \mathbb{R}^+$. We say $\LL$
is a {\em rectangular cartogram} for $a$ if the area of each
$r_i\in R(\LL)$ equals to $a(r_i)$. We also say $\LL$
{\em realizes} the area assignment function $a$.

Rectangular cartograms were introduced in \cite{Ra34} to display
certain numerical quantities associated with geographic regions.
Each rectangle $r_i$ represents a geographic region. Two regions are
geographically adjacent if and only if their corresponding rectangles
share a common boundary in $\LL$. The areas of the rectangles
represent the numeric values being displayed by the cartogram.

In some applications, several sets of numerical data must be displayed
as cartograms of the same set of geographic regions. For example,
three figures in \cite{Ra34} are the cartograms of land area,
population, and wealth within the United States. In such cases, we
wish to use cartograms whose underlying rectangular layouts are
{\em combinatorially equivalent} (to be defined later).
Fig \ref{fig:layout} (1) and (2) show two combinatorially
equivalent layouts with different area assignments.
The following notion was introduced in \cite{EMSV12}.

\begin{figure}[t]
\begin{center}
\includegraphics[width=0.45\textwidth, angle =0]{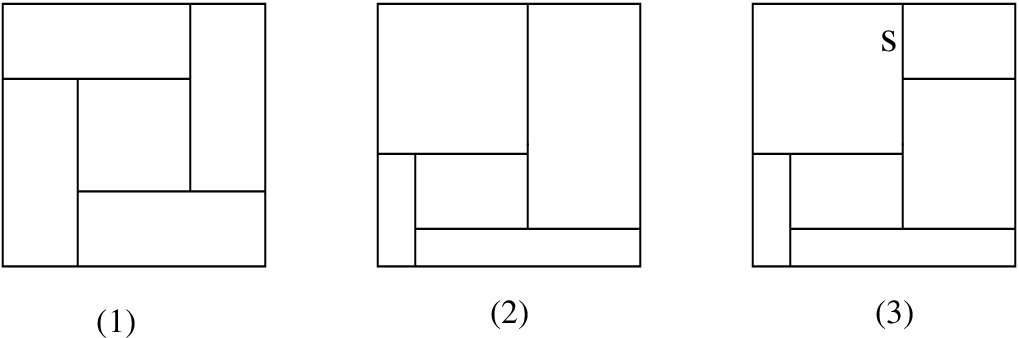}
  \centering
\caption{Examples of rectangular layout: (1) and (2) are
two combinatorially equivalent layouts with different
area assignments. Both are area-universal layouts.
(3) A layout that is not area-universal.}
\label{fig:layout}
\end{center}
\end{figure}
\begin{definition}
A rectangular layout $\LL$ is {\em area-universal} if any
area assignment function $a$ of $\LL$ can be realized by a
rectangular layout that is {\em combinatorially equivalent} to $\LL$.
\end{definition}


A natural question is: which layouts are area-universal?
A nice characterization of area-universal rectangular layouts was
discovered in \cite{EMSV12}:

\begin{theorem}$\cite{EMSV12}$\label{thm:universal}
A rectangular layout $\LL$ is area-universal if and only
if every maximal line segment in $\LL$ is a side of at
least one rectangle in $\LL$.
(A maximal line segment is a line segment in $\LL$ that cannot
be extended without crossing other line segments in $\LL$.)
\end{theorem}

In Fig \ref{fig:layout}, the layouts (1) and (2) are area-universal,
but the layout (3) is not. (The maximal vertical line segment $s$
is not a side of any rectangle.)

For a plane graph $G$, we say a rectangular layout $\LL$ {\em represents}
$G$ if the following hold: (1) The set of smaller rectangles of $\LL$
one-to-one corresponds to the set of vertices of $G$; and
(2) two vertices $u$ and $v$ are adjacent in $G$ if and only if their
corresponding rectangles in $\LL$ share a common boundary. In other words,
if $\LL$ represents $G$, then $G$ is the dual graph of small rectangles in $\LL$.

Area-universal rectangular layout representations of graphs are useful
in other fields \cite{Mu08}. In VLSI design, for example \cite{YS95},
the rectangles in $\LL$ represent circuit components, and the common
boundary between rectangles in $\LL$ model the adjacency requirements
between components. In early VLSI design stage, the chip areas of circuit
components are not known yet. Thus, at this stage, only the relative
positions of the components are considered. At later design stages, the
areas of the components (namely, the rectangles in $\LL$) are specified.
An area-universal layout $\LL$ enables the realization of the area
assignments specified at later design stages. Thus, the ability of
finding an area-universal layout at the early
design stage will greatly simplify the design process at later stages.
The applications of rectangular layouts and cartograms in building design
and in tree-map visualization can be found in \cite{EM79,BHW00}.
Heuristic algorithms for computing the coordinates of a rectangular layout
that realizes a given area assignment function were presented in \cite{WKC88,KS07}.

A plane graph $G$ may have many rectangular layouts. Some of them
may be area-universal, while the others are not. Not every plane graph
has an area-universal layout. In \cite{Ri87}, Rinsma described an
outerplanar graph $G$ and an area assignment to its vertices such that
no rectangular layout realizes the area assignment.
Thus it is important to determine if $G$ has an area-universal layout
or not. Based on Theorem \ref{thm:universal}, Eppstein et al.
$\cite{EMSV12}$ described an algorithm that finds an area-universal
layout for $G$ if one exists. Their algorithm takes $O(2^{O(K^2)}n^{O(1)})$
time, where $K$ is the maximum number of degree 4 vertices in any minimal
separation component. For a fixed $K$, the algorithm runs in polynomial
time. However, their algorithm takes exponential time in general case.

In this paper, we describe the first polynomial-time algorithm for
solving this problem. Our algorithm is based on studies of properties of
area-universal layouts and their connection to the {\em regular edge
labeling} construction.
The paper is organized as follows. In \S \ref{sec:pre-universal}, we introduce
basic definitions and preliminary results. \S \ref{sec:outline} outlines
a Face-Addition algorithm with exponential time that determines if $G$
has an area-universal rectangular layout. \S \ref{sec:concept} introduces the
concepts of forbidden pairs, $\GG$-pairs and $\MM$-triples that are extensively used
in our algorithm. In \S \ref{sec:impl}, we describe how to convert
the Face-Addition algorithm with exponential time to an algorithm with
polynomial time.

\section{Preliminaries}\label{sec:pre-universal}

In this section, we give definitions and important preliminary results.
Definitions not mentioned here are standard. A graph $G=(V,E)$ is called
{\em planar} if it can be drawn on the plane with no edge
crossings. Such a drawing is called a {\em plane embedding} of $G$.
A {\em plane graph} is a planar graph with a fixed plane embedding.
A plane embedding of $G$ divides the plane into a number of connected
regions. Each region is called a {\em face}. The unbounded region is called
the {\em exterior face}. The other regions are called {\em interior faces}.
The vertices and edges on the exterior face are called {\em exterior
vertices and edges}. Other vertices and edges are called {\em interior vertices
and edges}. We use {\em cw} and {\em ccw} as the abbreviation of {\em clockwise}
and {\em counterclockwise}, respectively.

For a simple path $P=\{v_1, v_2, \cdots, v_p\}$ of $G$, the {\em length}
of $P$ is the number of edges in $P$. $P$ is called \emph{chord-free} if
for any two vertices $v_i, v_j$ with $|i-j|>1$, the edge $(v_i, v_j)\notin E$.
A {\em triangle} of a plane graph $G$ is a cycle $C$ with three edges.
$C$ divides the plane into its interior and exterior regions.
A {\em separating triangle} is a triangle in $G$ such that
there are vertices in both the interior and the exterior of $C$.

When discussing the rectangular layout $\LL$ of a plane graph $G$, we can simplify
the problem as follows. Let $a,b,c,d$ be the four designated exterior vertices of $G$
that correspond to the four rectangles in $\LL$ located at the southwest,
northwest, northeast and southeast corners, respectively. Let the
{\em extended graph} $G_{ext}$ be the graph obtained from $G$ as follows:

\begin{enumerate}
\item Add four vertices $v_W,v_N,v_E,v_S$ and four edges
$(v_W,v_N),(v_N,v_E),(v_E,v_S),(v_S,v_W)$ into $G_{ext}$.
\item
Connect $v_W$ to every vertex of $G$ on the exterior face between $a$ and $b$
in cw order.
Connect $v_N$ to every vertex of $G$ on the exterior face between $b$ and $c$
in cw order.
Connect $v_E$ to every vertex of $G$ on the exterior face between $c$ and $d$
in cw order.
Connect $v_S$ to every vertex of $G$ on the exterior face between $d$ and $a$
in cw order.
\end{enumerate}

See Figs \ref{fig:REL} (1) and (2) for an example.
It is well known \cite{KK85} that
$G$ has a rectangular layout $\LL$ if and only if $G_{ext}$ has a
rectangular layout $\LL_{ext}$, where the rectangles corresponding to
$v_W,v_N,v_E,v_S$ are located at the west, north, east and south boundary
of $\LL_{ext}$, respectively.
Not every plane graph has rectangular layouts. The following
theorem characterizes the plane graphs with rectangular layouts.

\begin{figure}[t]
\begin{center}
\includegraphics[width=0.8\textwidth, angle =0]{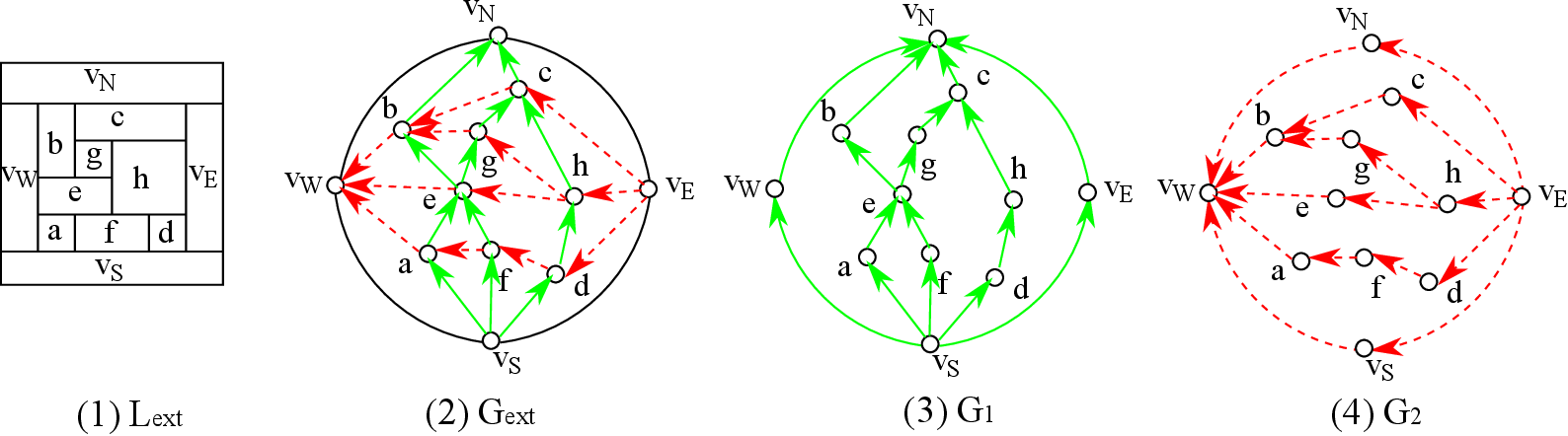}
  \centering
\caption{Examples of rectangular layout and $\REL$. (1) Rectangular
layout $\LL_{ext}$; (2) The graph corresponding to $\LL_{ext}$ with
an $\REL$ $\R=\{T_1,T_2\}$; (3) the graph $G_1$ of $\R$; (4) the
graph $G_2$ of $\R$.}
\label{fig:REL}
\end{center}
\end{figure}

\begin{theorem}\label{thm:proper} \cite{KK85}
A plane graph $G$ has a rectangular layout $\LL$ with
four rectangles on its boundary if and only if:
\begin{enumerate}
\item Every interior face of $G$ is a triangle and the exterior face
of $G$ is a quadrangle; and \item $G$ has no separating triangles.
\end{enumerate}
\end{theorem}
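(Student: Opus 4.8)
The plan is to prove the two implications separately. Necessity is a direct geometric argument about the layout; sufficiency is the substantial direction, and I would establish it constructively through a regular edge labeling, which is exactly the tool the rest of the paper relies on.

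For necessity, suppose $\LL$ represents $G$. Since $\LL$ is a partition of a rectangle with no four smaller rectangles meeting at a point, every interior junction where the maximal segments cross is a \emph{T-junction}, at which exactly three rectangles are mutually adjacent; dualizing each such junction yields a triangular interior face, while the four corner rectangles force the exterior face to be a quadrangle. This gives condition~1. For condition~2, suppose for contradiction that $\{a,b,c\}$ is a separating triangle of $G$. Then the rectangles $r_a,r_b,r_c$ are pairwise adjacent, and a short case analysis of the ways three axis-aligned rectangles can pairwise share boundary shows that their union is simply connected and cannot both enclose a further rectangle and leave rectangles outside. Hence no vertex of $G$ can lie strictly inside the triangle, contradicting the assumption that it separates.

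For sufficiency, assume $G$ satisfies conditions~1 and~2. The first step is to show that $G$ admits a regular edge labeling $\RR=\{T_1,T_2\}$: a partition of the interior edges into two classes, each consistently oriented, so that around every interior vertex the incident edges appear in four contiguous blocks in ccw order (incoming $T_1$, incoming $T_2$, outgoing $T_1$, outgoing $T_2$), with the prescribed boundary behavior at the four exterior vertices $a,b,c,d$. The second step is to read off the layout from $\RR$: form the two digraphs $G_1$ and $G_2$ induced by $T_1$ and $T_2$ (as in Fig~\ref{fig:REL}), argue from planarity and the four-block condition that each is acyclic, assign $x$-coordinates by a longest-path computation in $G_1$ and $y$-coordinates by one in $G_2$, and let each vertex's rectangle span the resulting intervals. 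A routine verification then shows that two rectangles share a boundary precisely when the corresponding vertices are adjacent in $G$, so the resulting $\LL$ represents $G$.

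The main obstacle is the existence of the regular edge labeling in the first step, i.e.\ showing that conditions~1 and~2 are \emph{sufficient} to produce one. I would prove this by induction on $|V|$, at each stage deleting a carefully chosen vertex adjacent to one of the four corners (or contracting along a suitable edge) so that the smaller graph still satisfies conditions~1 and~2, and then extending the labeling of the smaller graph across the deleted vertex. The no-separating-triangle hypothesis is precisely what guarantees that such a removable vertex always exists and that the extension never creates a forbidden local pattern; without it the induction breaks down, which is exactly why condition~2 appears in the statement.
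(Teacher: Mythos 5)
The paper does not prove this statement at all: it is quoted verbatim from \cite{KK85} as a known characterization, so there is no in-paper argument to compare against. Judged on its own, your proposal is a sound outline, but it follows the modern route through regular edge labelings (essentially the proofs in \cite{He93,KH97}, which this paper cites for exactly these facts) rather than the original construction of Ko\'{z}mi\'{n}ski and Kinnen. Your necessity direction is fine and standard: interior junctions of $\LL$ are T-junctions because no four rectangles meet at a point, so the dual's interior faces are triangles, and three pairwise adjacent axis-parallel rectangles cannot enclose a hole (one needs a four-rectangle pinwheel for that), so a separating triangle is impossible. For sufficiency, the REL-plus-compaction route is valid and has the advantage of plugging directly into the machinery the rest of this paper uses.

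That said, be aware of where your sketch hides the real work. First, the existence of an $\REL$ for every proper triangular plane graph is the entire content of \cite{He93}; ``delete a carefully chosen vertex adjacent to a corner and extend the labeling'' is the right shape of the induction (absence of separating triangles is what guarantees a contractible edge or removable vertex exists), but the local extension step and the preservation of conditions 1 and 2 require a genuine case analysis that you have not supplied. Second, the compaction step is stated too loosely: a single longest-path value per vertex in $G_1$ and $G_2$ does not by itself yield the two intervals defining a rectangle; in the standard construction the left/right (resp.\ bottom/top) sides of $r_v$ come from the two faces of one of the digraphs incident to $v$, whose coordinates are longest-path distances in the corresponding dual, and the check that adjacency in $G$ matches shared boundary exactly (no unwanted adjacencies appear) is the nontrivial part of that verification. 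Also note you have the roles of $G_1$ and $G_2$ swapped relative to this paper's convention, where $T_1$ encodes the below/above relation; this is harmless but worth fixing for consistency. None of these is a wrong turn, but as written the proposal is a plan whose two hard steps are deferred to the cited literature.
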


A plane graph that satisfies the conditions in Theorem \ref{thm:proper}
is called a {\em proper triangular plane graph}. From now on we
only consider such graphs.

Our algorithm relies heavily on the concept of the \emph{regular edge
labeling} ($\REL$) introduced in $\cite{He93}$. $\REL$s have also been
studied by Fusy \cite{Fu06,Fu09}, who refers them as {\em transversal
structures}. $\REL$ are closely related to several other edge
coloring structures of planar graphs that can be used to describe
straight line embeddings of orthogonal polyhedra \cite{Ep10,EM10}.

\begin{definition}\label{def:REL}
Let $G$ be a proper triangular plane graph.
A regular edge labeling $\REL$ $\R=\{T_1, T_2\}$ of
$G$ is a partition of the interior edges of $G$ into two
subsets $T_1, T_2$ of directed edges such that:

\begin{itemize}
\item For each interior vertex $v$, the edges incident to $v$
appear in ccw order around $v$ as follows: a set of
edges in $T_1$ leaving $v$; a set of edges in $T_2$ leaving $v$;
a set of edges in $T_1$ entering $v$; a set of edges in $T_2$ entering $v$.
(Each of the four sets contains at least one edge.)

\item Let $v_N, v_W, v_S, v_E$ be the four exterior vertices in
ccw order. All interior edges incident to $v_N$ are
in $T_1$ and entering $v_N$. All interior edges incident to
$v_W$ are in $T_2$ and entering $v_W$. All interior edges incident
to $v_S$ are in $T_1$ and leaving $v_S$. All interior edges
incident to $v_E$ are in $T_2$ and leaving $v_E$.
\end{itemize}
\end{definition}

Fig \ref{fig:REL} (2) shows an example of $\REL$. (The green solid
lines are edges in $T_1$. The red dashed lines are edges in $T_2$.)
It is well known that every proper triangular plane graph $G$ has a
$\REL$, which can be found in linear time \cite{He93,KH97}. Moreover,
from a $\REL$ of $G$, we can construct a rectangular layout $\LL$
of $G$ in linear time \cite{He93,KH97}. Conversely, if we have a
rectangular layout $\LL$ for $G$, we can easily obtain a $\REL$ $\R$
of $G$ as follows. For each interior edge $e=(u,v)$ in $G$, we label
and direct $e$ according to the following rules.  Let $r_u$ and $r_v$
be the rectangle in $\LL$ corresponding to $u$ and $v$ respectively.

\begin{itemize}
\item  If $r_u$ is located below $r_v$ in $\LL$, the edge $e$
is in $T_1$ and directed from $u$ to $v$.
\item   If $r_u$ is located to the right of $r_v$ in $\LL$,
the edge $e$ is in $T_2$ and directed from $u$ to $v$.
\end{itemize}

The $\REL$ $\R$ obtained as above is called the {\em $\REL$ derived from $\LL$}.
(See Fig \ref{fig:REL} (1) and (2)).

\begin{definition}
Let $\LL_1$ and $\LL_2$ be two rectangular layouts of a proper triangular
plane graph $G$. We say $\LL_1$ and $\LL_2$ are {\rm combinatorially
equivalent} if the $\REL$s of $G$ derived from $\LL_1$ and
from $\LL_2$  are identical.
\end{definition}

Thus, the $\REL$s of $G$ one-to-one correspond to
the combinatorially equivalent rectangular layouts of $G$.
We can obtain two directed subgraphs $G_1$ and $G_2$ of $G$ from
an $\REL$ $\R=\{T_1, T_2\}$ as follows.

\begin{itemize}
\item The vertex set of $G_1$ is $V$. The edge set of $G_1$ consists of
the edges in $T_1$ with direction in $T_1$, and the four exterior edges
directed as: $v_S \rightarrow v_W, v_S \rightarrow v_E,
v_W \rightarrow v_N, v_E \rightarrow v_N$.
\item The vertex set of $G_2$ is $V$. The edge set of $G_2$ consists of
the edges in $T_2$ with direction in $T_2$, and the four exterior edges
directed as: $v_S \rightarrow v_W, v_N \rightarrow v_W,
v_E \rightarrow v_S, v_E \rightarrow v_N$.
\end{itemize}

Fig \ref{fig:REL} (3) and (4) show the graph $G_1$ and $G_2$
for the $\REL$ shown in Fig \ref{fig:REL} (2).
For each face $f_1$ in $G_1$, the boundary of $f_1$ consists of two
directed paths. They are called the two {\em sides} of $f_1$.
Each side of $f_1$ contains at least two edges. Similar properties
hold for the faces in $G_2$ \cite{Fu06,Fu09,He93,KH97}.

\begin{definition}
A $\REL$ $\R=\{T_1, T_2\}$ of $G$ is called {\em slant}
if for every face $f$ in either $G_1$ or $G_2$, at least one
side of $f$ contains exactly two directed edges.
\end{definition}

Theorem \ref{thm:universal} characterizes the area-universal layouts in terms
of maximal line segments in $\LL$. The following lemma characterizes
area-universal layouts in term of the $\REL$ derived from $\LL$.

\begin{lemma}\label{lemma:slant}
A rectangular layout $\LL$ is area-universal if and
only if the $\REL$ $\R$ derived from $\LL$ is slant.
\end{lemma}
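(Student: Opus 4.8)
The plan is to translate Theorem~\ref{thm:universal}, stated in terms of maximal line segments of $\LL$, into the language of the faces of $G_1$ and $G_2$ by building a geometric dictionary between maximal segments and faces. First I would record the meaning of the two edge classes of the derived $\REL$: by the construction rules, an interior edge lies in $T_1$ exactly when its two rectangles are vertically adjacent (sharing a horizontal boundary) and in $T_2$ exactly when they are horizontally adjacent (sharing a vertical boundary). Consequently $G_1$ is the bottom-to-top $st$-graph with source $v_S$ and sink $v_N$, and $G_2$ is the right-to-left $st$-graph with source $v_E$ and sink $v_W$, so each interior face of either graph is bounded by two directed monotone paths, its two \emph{sides}.

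Next I would build the key bijection. Fix an interior maximal vertical segment $s$ and list the rectangles touching it on the left, from bottom to top, as $\ell_1,\dots,\ell_p$, and those on the right as $r_1,\dots,r_q$. Consecutive rectangles in each list are vertically adjacent, so the left list is a directed path $\ell_1\to\cdots\to\ell_p$ in $G_1$ and likewise on the right. The bottom endpoint of $s$ is a T-junction lying on a horizontal segment; the rectangle $b$ immediately below that junction is vertically adjacent to both $\ell_1$ and $r_1$, giving edges $b\to\ell_1$ and $b\to r_1$, and symmetrically the rectangle $t$ above the top endpoint gives $\ell_p\to t$ and $r_q\to t$ (when an endpoint lies on the outer boundary, $b$ or $t$ is the corresponding exterior vertex $v_S$ or $v_N$). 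These two directed paths from $b$ to $t$ enclose no further $T_1$ edge, so together they bound a single interior face $f_1(s)$ of $G_1$ whose two sides are exactly the left and right lists augmented by $b$ and $t$. I would then show $s\mapsto f_1(s)$ is a bijection onto the interior faces of $G_1$, and establish the analogous bijection $s\mapsto f_2(s)$ between interior maximal horizontal segments and interior faces of $G_2$.

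The decisive step is then immediate: the left side of $f_1(s)$ has $p+1$ edges and the right side has $q+1$ edges, so a side of $f_1(s)$ has exactly two edges if and only if $p=1$ or $q=1$, i.e. if and only if a single rectangle spans the entire left (resp. right) flank of $s$, which is precisely the statement that $s$ is a side of some rectangle. Hence every face of $G_1$ has a two-edge side iff every interior maximal vertical segment is a side of some rectangle, and symmetrically for $G_2$ and horizontal segments. Since the four maximal segments forming the outer boundary are automatically sides of the boundary rectangles $v_W,v_N,v_E,v_S$, chaining these equivalences with Theorem~\ref{thm:universal} gives: $\LL$ is area-universal iff every maximal segment is a side of a rectangle iff both slant conditions hold iff $\RR$ is slant.

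The main obstacle I anticipate is rigorously justifying the bijection in the second step: verifying that the two rectangle-chains flanking $s$ genuinely close up into one face of $G_1$ (that $b$ and $t$ are well defined and common to both chains, that no stray $T_1$ edge crosses the enclosed region, and that every interior face arises from exactly one segment). The orientation bookkeeping forced by Definition~\ref{def:REL} — that the $T_1$ edges along each flank are consistently directed bottom-to-top, so the two flanks are genuinely the two \emph{sides} of the face in the sense defined for $G_1$ — is what needs the most care, and the boundary cases where $s$ abuts the outer rectangle must be checked separately so that the exterior vertices and edges of $G_1$ correctly play the role of $b$ or $t$.
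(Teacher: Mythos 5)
Your proposal is correct and follows essentially the same route as the paper's proof: both establish the correspondence between interior faces of $G_1$ (resp.\ $G_2$) and maximal vertical (resp.\ horizontal) segments of $\LL$, observe that a two-edge side of a face is exactly a single rectangle spanning one flank of the corresponding segment, and then invoke Theorem~\ref{thm:universal}. The paper simply asserts the face--segment correspondence and the flank argument in a few lines, whereas you spell out the bijection and the boundary cases explicitly; the substance is the same.
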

\begin{proof}
Note that each face in $G_1$ ($G_2$, respectively) corresponds
to a maximal vertical (horizontal, respectively) line
segment in $\LL$. (In the graph $G_1$ in Fig
\ref{fig:REL} (3), the face $f_1$ with the vertices $f,e,g,c,h$
corresponds to the vertical line segment that is on the left side
of the rectangle $h$ in Fig \ref{fig:REL} (1)).

Assume $\LL$ is area-universal. Consider a face $f$ in $G_1$. Let
$l_f$ be the maximal vertical line segment in $\LL$ corresponding to
$f$. Since $\LL$ is area-universal, $l_f$ is a side of a rectangle
$r$ in $\LL$. Without loss of generality, assume $r$ is to the left
of $l_f$. Then the left side of the face $f$ consists
of exactly two edges. Thus $G_1$ satisfies the slant property.
Similarly, we can show $G_2$ also satisfies the slant property.

Conversely, assume $\R$ is a slant $\REL$. The above argument can be
reversed to show that $\LL$ is area-universal.
\end{proof}

The $\REL$ shown in Fig \ref{fig:REL} (2) is not slant because
the slant property fails for one $G_2$ face.
So the corresponding layout shown in Fig \ref{fig:REL} (1)
is not area-universal. By Lemma \ref{lemma:slant}, the problem of
finding an area-universal layout for $G$ is the same as the problem
of finding a slant $\REL$ for $G$. From now on, we consider the
latter problem and $G$ always denotes a proper triangular plane graph.

\section{Face-Addition Algorithm with Exponential Time}\label{sec:outline}

In this section, we outline a Face-Addition procedure that generates
a slant $\REL$ $\R =\{T_1,T_2\}$ of $G$ through a sequence of steps.
The procedure starts from the directed path consisting of two edges
$v_S \rightarrow v_E \rightarrow v_N$. Each step maintains a partial
slant $\REL$ of $G$. During a step, a face $f$ of $G_1$ is added to the
current graph, resulting in a larger partial slant $\REL$. When $f$ is
added, its right side is already in the current graph. The edges on the
left side of $f$ are placed in $T_1$ and directed upward. The edges
of $G$ in the interior of $f$ are placed in $T_2$ and directed to the
left. The process ends when the left boundary $v_S \rightarrow v_W
\rightarrow v_N$ is reached. With this informal description in mind,
we first introduce a few definitions. Then we will formally
describe the Face-Addition algorithm (which takes exponential time).

\begin{figure}[t]
\begin{center}
\includegraphics[width=0.4\textwidth, angle =0]{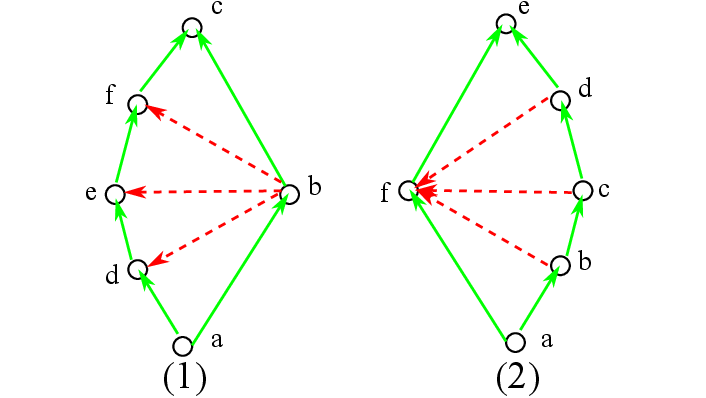}
\centering
\caption{
(1)  a fan $\FF(a, b, c)$ has the back boundary $(a, b, c)$ and the front boundary $(a, d, e, f, c)$;
(2)  a mirror fan $\MM(a, f, e)$ has the back boundary $(a, b, c, d, e)$ and the front boundary $(a, f, e)$.}
\label{fig:gadget}
\end{center}
\end{figure}

Consider a face $f$ of $G_1$ added during the above procedure.
Because we want to generate a slant $\REL$ $\R$, at least one side of
$f$ must be a path of length 2. This motivates the following definition.
Figs \ref{fig:gadget} (1) and (2)
show examples of a fan and a mirror fan, respectively.

\begin{definition}\label{def:gadget}
Let $v_l,v_m,v_h$ be three vertices of $G$ such that
$v_l$ and $v_h$ are two neighbors of $v_m$ and $(v_l,v_h) \notin E$.
Let $P_{cw}$ be the path consisting of the neighbors $\{ v_l,v_1,\ldots,v_p,v_h\}$
of $v_m$ in cw order between $v_l$ and $v_h$.
Let $P_{ccw}$ be the path consisting of the neighbors $\{ v_l,u_1,\ldots,u_q,v_h\}$
of $v_m$ in ccw order between $v_l$ and $v_h$. Note that since $G$ has no
separating triangles, both $P_{cw}$ and $P_{ccw}$ are chord-free.
\begin{enumerate}

\item The directed and labeled subgraph
of $G$ induced by the vertices $v_l,v_m,v_h,v_1,\ldots,v_p$
is called the {\em fan at} $\{v_l, v_m, v_h\}$
and denoted by $\FF(v_l, v_m, v_h)$, or simply $g$.

\begin{itemize}
\item The \emph{front boundary} of $g$, denoted by $\alpha(g)$,
consists of the edges in $P_{cw}$ directed from $v_l$ to $v_h$ in cw order.
The edges in $\alpha(g)$ are colored green.

\item The \emph{back boundary} of $g$, denoted by $\beta(g)$,
consists of two directed edges $v_l \rightarrow v_m$ and $v_m \rightarrow  v_h$.
The edges in $\beta(g)$ are colored green.

\item The \emph{inner edges} of $g$, denote by $\gamma(g)$, are the edges
between $v_m$ and the vertices $v\neq v_l,v_h$ that are on the path
$P_{cw}$. The inner edges are colored red and directed away from $v_m$.
\end{itemize}

\item The directed and labeled
subgraph of $G$ induced by the vertices $v_l,v_m,v_h,u_1,\ldots,u_q$
is called the {\em mirror fan at} $\{v_l, v_m, v_h\}$
and denoted by $\MM(v_l, v_m, v_h)$, or simply $g$.

\begin{itemize}
\item The \emph{front boundary} of $g$, denoted by $\alpha(g)$,
consists of two directed edges $v_l\rightarrow v_m$ to $v_m\rightarrow v_h$.
The edges in $\alpha(g)$ are colored green.

\item The \emph{back boundary} of $g$, denoted by $\beta(g)$, consists
of the edges in $P_{ccw}$ directed from $v_l$ to $v_h$ in ccw order.
The edges in $\beta(g)$ are colored green.

\item The \emph{inner edges} of $g$, denote by $\gamma(g)$, are the edges
between $v_m$ and the vertices $v\neq v_l,v_h$ that are on the path
$P_{ccw}$. The inner edges are colored red and directed into $v_m$.
\end{itemize}
\end{enumerate}
\end{definition}

Both $\FF(v_l, v_m, v_h)$ and $\MM(v_l, v_m, v_h)$ are
called a {\em gadget at $v_l,v_m, v_h$}. We use $g(v_l, v_m, v_h)$ to
denote either of them. The vertices other than $v_l$ and $v_h$
are called the {\em internal vertices} of the gadget.
If a gadget has only one inner edge, it can be called either
a fan or a mirror fan. For consistency, we call it a fan.
We use $g_0=\FF(v_S,v_E,v_N)$ to denote the {\em initial fan}, and
$g_T=\MM(v_S,v_W,v_N)$ to denote the {\em final mirror fan}.
The following observation is clear:

\begin{observation}\label{obs:slant}
For a slant $\REL$, each face $f$ of $G_1$ is a gadget of $G$.
\end{observation}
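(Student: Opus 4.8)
The plan is to unpack what a face $f$ of $G_1$ looks like for a slant $\REL$ and match it directly against the fan/mirror-fan definitions. First I would recall from the structural facts stated in the excerpt that the boundary of any face $f_1$ of $G_1$ consists of two directed paths (the two \emph{sides} of $f_1$), each side containing at least two edges, and that all edges of $G_1$ coming from $\RR$ lie in $T_1$. Since $\RR$ is slant, Definition of slant guarantees that at least one side of $f$ has exactly two edges; call the common source of the two sides $v_l$ and their common sink $v_h$. The short side is then a path $v_l \rightarrow v_m \rightarrow v_h$ of length $2$ through a single intermediate vertex $v_m$, while the other side is a directed $T_1$-path from $v_l$ to $v_h$. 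Because $G$ is a proper triangular graph, the interior of $f$ is triangulated, and the only way to triangulate the region bounded by these two $T_1$-paths — with no separating triangles — is to connect $v_m$ to every vertex on the long side by edges that must be placed in $T_2$ (they cannot be in $T_1$, or they would appear inside a $G_1$ face rather than on its boundary).

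The key step is then to read off the orientation of these interior $T_2$-edges at $v_m$ using the cyclic order condition of Definition \ref{def:REL}. Around the interior vertex $v_m$, the $T_1$-edges and $T_2$-edges must appear in the fixed ccw pattern (outgoing $T_1$, outgoing $T_2$, incoming $T_1$, incoming $T_2$). The two $T_1$-edges $v_l \rightarrow v_m$ and $v_m \rightarrow v_h$ fix whether the short side is the \emph{front} or the \emph{back} of $f$, and this in turn forces all the interior $T_2$-edges at $v_m$ to be uniformly oriented — either all leaving $v_m$ or all entering $v_m$. When they all leave $v_m$ and the long side is on the clockwise side, the configuration is exactly $\FF(v_l,v_m,v_h)$; when they all enter $v_m$ and the long side is on the counterclockwise side, it is exactly $\MM(v_l,v_m,v_h)$. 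I would verify in each case that the long side is chord-free: a chord would create a separating triangle or violate the ccw labeling pattern, both of which are impossible.

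I expect the main obstacle to be the bookkeeping that pins down \emph{which} of the two sides of $f$ is the short one and \emph{which} way the inner $T_2$-edges point, i.e.\ showing rigorously that the local ccw-order constraint of the $\REL$ leaves no option other than the two listed in Definition \ref{def:gadget}. In particular I must rule out mixed orientations of the inner edges and confirm that the degenerate case (a single inner edge, where both sides have length $2$) is consistently counted as a fan, matching the convention stated just before the observation. Once the orientation pattern at $v_m$ is forced, the identification with either $\FF$ or $\MM$ is immediate, since the front-boundary, back-boundary, and inner-edge labels in Definition \ref{def:gadget} were set up precisely to encode these two possibilities. This yields the observation.
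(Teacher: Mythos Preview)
Your plan is correct and is exactly the natural unpacking of the observation: use the slant condition to get a length-$2$ side $v_l\to v_m\to v_h$, use the fact that $G_1$ contains every vertex of $G$ so the interior of $f$ is triangulated only by $T_2$ edges, and then use the ccw pattern at $v_m$ (and at the long-side vertices) to force every interior edge to be incident to $v_m$ with a uniform orientation. The chord-free check via ``a chord would create a separating triangle with $v_m$'' is also the right move.

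The paper itself gives no proof at all; it simply declares the observation ``clear'' immediately after Definition~\ref{def:gadget}. So there is no alternative argument to compare against --- your write-up would be a fleshed-out justification of what the authors left implicit. One small point worth making explicit when you write it up: the reason no interior $T_2$ edge can join two vertices on the \emph{same} side of $f$ is that at every intermediate vertex on the right side the edges into $f$ are all outgoing $T_2$, and at every intermediate vertex on the left side they are all incoming $T_2$; this is what forces every interior edge to cross from one side to the other and hence to hit $v_m$ when the short side has only $v_m$ in the middle.
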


The $\REL$ shown in Fig \ref{fig:REL} (2) is generated by adding the
gadgets: $\FF(v_S,v_E,v_N)$, $\FF(v_S,d,h)$,
$\FF(f,h,c)$, $\MM(e,b,v_N)$, $\FF(v_S,f,e)$, $\MM(v_S,v_W,v_N)$.
The following lemma is needed later.

\begin{lemma}\label{lemma:number}
The total number of gadgets in $G$ is at most $O(n^2)$.
\end{lemma}

\begin{proof}
Let $\mbox{deg}(v)$ denote the degree of the vertex $v$ in $G$. For each
$v$, there are at most $2\cdot\mbox{deg}(v)\cdot (\mbox{deg}(v)-3)$
gadgets with $v$ as its middle element. Thus the total number of gadgets
of $G$ is at most
$\sum_{v\in V}2\cdot\mbox{deg}(v) \cdot (\mbox{deg}(v)-3) = O(n^2)$.
\end{proof}

\begin{definition}\label{def:cut}
A \emph{cut} $C$ of $G$ is a directed path from $v_S$ to $v_N$ that is
the left boundary of the subgraph of $G$ generated during the
Face-Addition procedure. In particular,
$C_0= v_S\rightarrow v_E \rightarrow v_N$ denotes the {\em initial cut} and
$C_T= v_S\rightarrow v_W \rightarrow v_N$ denotes the {\em final cut}.
\end{definition}

Let $C$ be a cut of $G$. For any two vertices $v_1, v_2$ of $C$, $C(v_1, v_2)$
denotes the subpath of $C$ from $v_1$ to $v_2$. The two paths $C$ and $C_0$
enclose a region on the plane. Let $G_{|C}$ denote the subgraph
of $G$ induced by the vertices in this region (including its boundary).

Consider a cut $C$ generated by Face-Addition procedure
and a gadget $g=g(v_l, v_m, v_h)$. In order for Face-Addition procedure
to add $g$ to $C$, the following conditions must be satisfied:
\begin{description}
\item[A1:] no internal vertices of $\alpha(g)$ are in $C$; and
\item[A2:] the back boundary $\beta(g)$ is contained in $C$; and
\item[A3:] $g$ is {\em valid} for $C$ (the meaning of {\em valid}
will be defined later).
\end{description}

If $g$ satisfies the conditions A1, A2 and A3, Face-Addition
procedure can add $g$ to the current graph $G_{|C}$ by {\em stitching}
$\beta(g)$ with the corresponding vertices on $C$. (Intuitively we are
{\em adding a face of $G_1$}.) Let $G_{|C} \otimes g$ denote the new subgraph
obtained by adding $g$ to $G_{|C}$. The new cut of $G_{|C} \otimes g$,
denoted by $C\otimes g$, is the concatenation of three subpaths
$C(v_S, v_l), \alpha(g), C(v_h, v_N)$.

The conditions A1 and A2 ensure that $C \otimes g$ is a cut. Any gadget
$g$ satisfying A1 and A2 can be added during a step while still maintaining
the slant property for $G_1$. However, adding such a $g$ may destroy the
slant property for $G_2$ faces. The condition A3 that $g$ is {\em valid}
for $C$ is to ensure the slant property for $G_2$ faces. (The $\REL$
shown in Fig \ref{fig:REL} (2) is not slant. This is because the gadget
$\FF(f,h,c)$ is not valid, as we will explain later.)
This condition will be discussed in \S \ref{sec:concept}.

After each iteration of Face-Addition procedure, the edges of
the current cut $C$ are always in $T_1$ and directed from $v_S$
to $v_N$. All $G_1$ faces $f_1$ in $G_{|C}$ are {\em complete}
(i.e. both sides of $f_1$ are in $G_{|C}$). Some $G_2$ faces in
$G_{|C}$ are complete. Some other $G_2$ faces $f_2$ in $G_{|C}$ are
{\em open}. (i.e. the two sides of $f_2$ are not completely in $G_{|C}$.)

\begin{definition}
Any subgraph $G_{|C}$ generated during the execution of
Face-Addition procedure is called a {\em partial slant}
$\REL$ of $G$, which satisfies the following conditions:
\begin{enumerate}
\item Every complete $G_1$ and $G_2$ face in $G_{|C}$ satisfies
the slant $\REL$ property.
\item For every open $G_2$ face $f$ in $G_{|C}$, at least one
side of $f$ has exactly one edge.
\end{enumerate}
\end{definition}

The intuitive meaning of a partial slant $\REL$ $G_{|C}$ is that it is
{\em potentially possible to grow} a complete slant $\REL$ of $G$ from
$G_{|C}$. The left boundary of a partial slant $\REL$ $\R$ is called
the {\em cut associated with} $\R$ and denoted by $C(\R)$.

\begin{definition}
\label{def:G-tilde}
\begin{enumerate}
\item $\PSR(G)$ denotes the set of all partial slant $\REL$s of $G$
that can be generated by Face-Addition procedure.
\item $\tilde{G} = \{ g~|~g \mbox{ is a gadget in a } \R \in \PSR(G)\}$.
\end{enumerate}
\end{definition}

Observe that every slant $\REL$ $\R$ of $G$ is in $\PSR(G)$. This is because
$\R$ is generated by adding a sequence of gadgets $g_1, \ldots g_T=\MM(v_S,v_W,v_N)$
to the initial gadget $g_0= \FF(v_S,v_E,v_N)$.
So if we choose this particular $g_i$ during the $i$th step, we will get $\R$
at the end. Thus $G$ has a slant $\REL$ if and only if $g_T \in \tilde{G}$.
Note that Face-Addition procedure works only if we know the correct
gadget addition sequence. Of course, we do not know such a sequence.
The Face-Addition algorithm, described in Algorithm
\ref{alg:face-addition}, generates all members in $\PSR(G)$.

\begin{algorithm}[htb]
\caption{Face-Addition algorithm with Exponential Time}
\label{alg:face-addition}

Initialize $\tilde{G}=\{ g_0\}$,
and $\PSR(G)=\{G_{|\alpha(g_0)}\}$\;

\Repeat{no such $g$ and $\R$ can be found}
{
Find a gadget $g$ of $G$ and an $\R\in \PSR(G)$ such that
the conditions A1, A2 and A3 are satisfied for $g$ and $C=C(\R)$\;

Add $g$ into $\tilde{G}$,
and add the partial slant $\REL$ $\R \otimes g$ into $\PSR(G)$\;
}

$G$ has a slant $\REL$ if and only if the final gadget $g_T \in \tilde{G}$\;
\end{algorithm}

Because $|\PSR(G)|$ can be exponentially large, Algorithm
\ref{alg:face-addition} takes exponential time.

\section{Forbidden Pairs, $\GG$-Pairs, $\MM$-Triples, Chains and Backbones}
\label{sec:concept}

In this section, we describe the conditions for adding a gadget to a
partial slant $\REL$ $\R \in \PSR(G)$, while still keeping the slant
$\REL$ property for $G_2$ faces. (In other words, the condition A3.)

\subsection{Forbidden Pairs}\label{sec:forbidden}

Consider a $\R \in \PSR(G)$ and its associated cut $C=C(\R)$.
Let $e$ be an edge of $C$. We use $\mbox{open-face}(e)$ to denote the
open $G_2$ face in $G_{|C}$ with $e$ as its open left boundary.
The {\em type} of $\mbox{open-face}(e)$ specifies the lengths of the
lower side $P_l$ and the upper side $P_u$ of $\mbox{open-face}(e)$:

\begin{itemize}
\item Type (1,1): $\mbox{length}(P_l)= 1$ and $\mbox{length}(P_u)= 1$.
\item Type (1,2): $\mbox{length}(P_l)= 1$ and $\mbox{length}(P_u)\geq 2$.
\item Type (2,1): $\mbox{length}(P_l)\geq 2$ and $\hbox{length}(P_u)= 1$.
\item Type (2,2): $\mbox{length}(P_l)\geq 2$ and $\mbox{length}(P_u)\geq 2$.
\end{itemize}

Note that the type of every open $G_2$ face in a partial slant
$\REL$ cannot be $(2,2)$. Based on the properties of $\REL$, we
have the following (see Fig \ref{fig:face-type}):

\begin{observation}
Let $\R \in \PSR(G)$ and $e$ be an edge on $C(\R)$.
\begin{itemize}
\item If $e$ is the last edge of $\alpha(g)$ of a fan or a mirror fan $g$,
the type of $\mbox{open-face}(e)$ is (2,1).
\item If $e$ is a middle edge of $\alpha(g)$ of a fan $g$,
the type of $\mbox{open-face}(e)$ is (1,1).
\item If $e$ is the first edge of $\alpha(g)$ of a fan or a mirror fan $g$,
the type of $\mbox{open-face}(e)$ is (1,2).
\end{itemize}
\end{observation}

\begin{figure}[ht]
\begin{center}
\includegraphics[width=0.40\textwidth, angle =0]{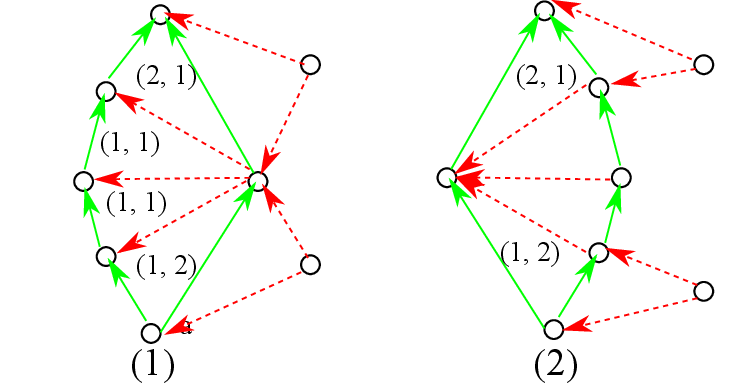}
  \centering
\caption{The types of open $G_2$ faces:
(1) Faces defined by edges on the front boundary of a fan;
(2) Faces defined by edges on the front boundary of a mirror fan.}
\label{fig:face-type}
\end{center}
\end{figure}
\vspace{-0.25in}

\begin{definition}\label{def:forbidden}
A pair $(g,g')$ of two gadgets of $G$ is called a {\em forbidden pair} if either
(1) the first edge of $\beta(g)$ is the last edge of $\alpha(g')$; or
(2) the last edge of $\beta(g)$ is the first edge of $\alpha(g')$.
\end{definition}

\begin{lemma}\label{lemma:forbidden}
If a partial $\REL$ $\R$ contains a forbidden pair $(g,g')$,
then $\R$ is not slant.
\end{lemma}

\begin{proof}
Case 1: Suppose the first edge $e_1$ of $\beta(g)$ is the last edge
of $\alpha(g')$ (see Fig \ref{fig:forbbiden} (1)). Let $e_2$ be the
first edge of $\alpha(g)$. The type of $\mbox{open-face}(e_1)$ is
$(2,1)$ (regardless of whether $g'$ is a fan or a mirror fan). Note
that $\mbox{open-face}(e_2)$ extends $\mbox{open-face}(e_1)$.
The length of the upper side of $\mbox{open-face}(e_1)$ is increased
by 1. Thus the type of $\mbox{open-face}(e_2)$ is $(2,2)$ and
the slant property for $G_2$ face fails.

Case 2: Suppose the last edge $e_1$ of $\beta(g)$ is the first edge
of $\alpha(g')$ (see Fig \ref{fig:forbbiden} (2)). Let $e_2$ be
the last edge of $\alpha(g)$. The type of $\mbox{open-face}(e_1)$
is $(1,2)$ (regardless of whether $g'$ is a fan or a mirror fan).
Note that $\mbox{open-face}(e_2)$ extends $\mbox{open-face}(e_1)$.
The length of the lower side of $\mbox{open-face}(e_1)$ is increased
by 1. Thus the type of $\mbox{open-face}(e_2)$ is $(2,2)$ and
the slant property for $G_2$ face fails.
\end{proof}

\begin{figure}[ht]
\begin{center}
\includegraphics[width=0.38\textwidth, angle =0]{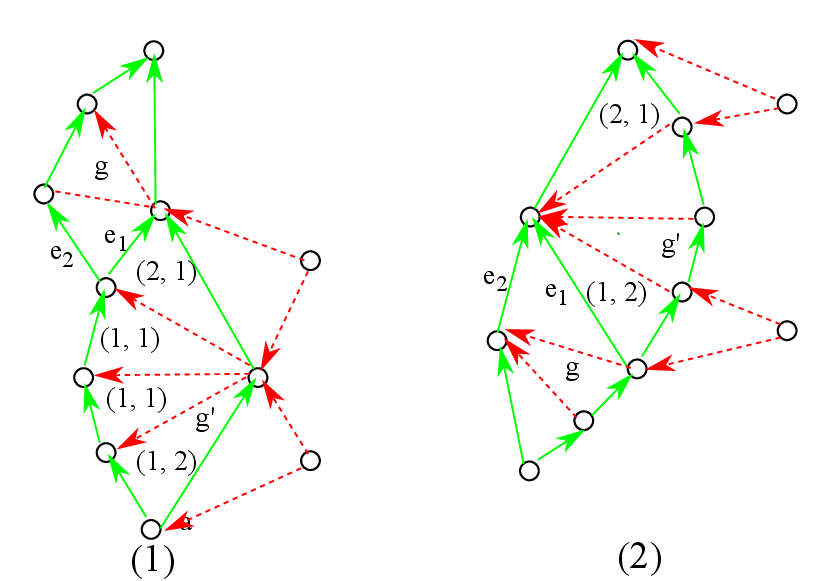}
  \centering
\caption{The proof of Lemma \ref{lemma:forbidden}: (1) $g$ is
a fan; (2) $g$ is a mirror fan.}
\label{fig:forbbiden}
\end{center}
\end{figure}

In the $\REL$ $\R$ shown in Fig \ref{fig:REL} (2), $(\FF(f,h,c),\FF(v_S,d,h))$
is a forbidden pair. So $\R$ is not a slant $\REL$.

\subsection{The Condition A3}

The following lemma specifies a necessary
and sufficient condition for adding a fan into $\tilde{G}$, and
a sufficient condition for adding a mirror fan into $\tilde{G}$.

\begin{lemma}\label{lemma:valid-fan}
Let $\R\in \PSR(G)$ and $C = C(\R)$ be its associated cut.
Let $g^L$ be a gadget and $L=\beta(g^L)$. Suppose that the conditions
A1 and A2 are satisfied for $g^L$ and $C$.
\begin{enumerate}
\item A fan $g^L$ can be added to $\R$ (i.e. $g^L$ satisfies the condition A3)
{\bf if and only if} there exists a gadget $g^R \in \R$ such that
$\beta(g^L)\subseteq \alpha(g^R)$.
\item A mirror fan $g^L$ can be added to $\R$ (i.e. $g^L$ satisfies the
condition A3) {\bf if} there exists a gadget $g^R \in \R$ such that
$\beta(g^L)\subseteq \alpha(g^R)$.
\end{enumerate}
\end{lemma}

\begin{figure}[ht]
\begin{center}
\includegraphics[width=0.95\textwidth, angle =0]{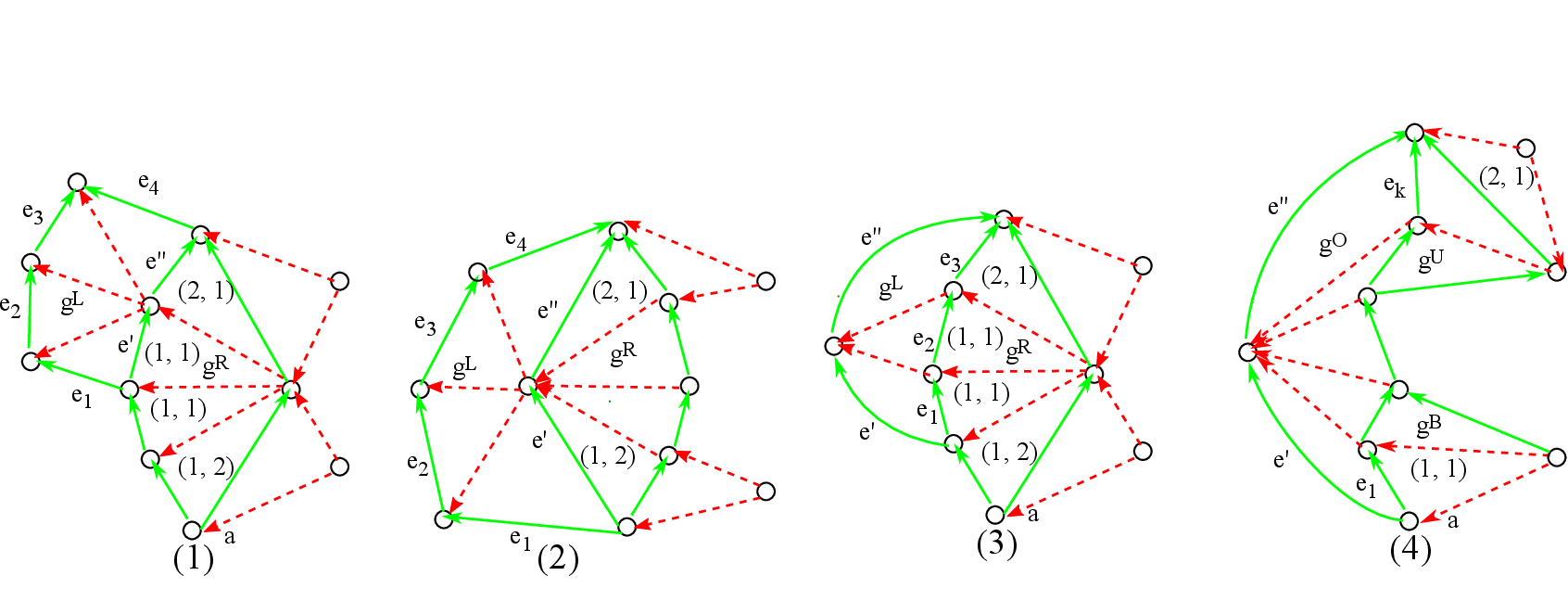}
\centering
\caption{
(1) and (2) open faces defined by edges on the front boundary of a fan $g^L$;
(3) and (4) open faces defined by edges on the front boundary of a mirror fan $g^O$.}
\label{fig:feasible}
\end{center}
\end{figure}
\vspace{-0.2in}

\begin{proof} If part of (1): Suppose there exists a gadget $g^R\in \R$
such that $\beta(g^L)\subseteq \alpha(g^R)$.
(Figs \ref{fig:feasible} (1) and (2) show two examples. In Fig
\ref{fig:feasible} (1), $g^R$ is a fan. In Fig \ref{fig:feasible} (2),
$g^R$ is a mirror fan). Let $e_1,\ldots,e_k$ be the edges in $\alpha(g^L)$.
Let $e',e''$ be the two edges in $\beta(g^L)$. Let $C'=C\otimes g^L$ be
the new cut after adding $g^L$. For each $2\leq i \leq k-1$, the type
of $\mbox{open-face}(e_i)$ is $(1,1)$.

\begin{itemize}
\item $\mbox{open-face}(e_1)$ extends $\mbox{open-face}(e')$, and
add 1 to the length of the upper side of $\mbox{open-face}(e')$.
\item $\mbox{open-face}(e_k)$ extends $\mbox{open-face}(e'')$, and
add 1 to the length of the lower side of $\mbox{open-face}(e'')$.
\end{itemize}

Regardless of where $e',e''$ are located on $\alpha(g^R)$, and regardless
of whether $g^R$ is a fan (see Fig \ref{fig:feasible} (1)) or a mirror fan
(see Fig \ref{fig:feasible} (2)), the type of $\mbox{open-face}(e_1)$
is $(1,2)$; and the type of $\mbox{open-face}(e_k)$ is $(2,1)$.
Thus $\R\otimes g^L \in \PSR(G)$.

Only if part of (1): Suppose that there exists no gadget $g^R\in \R$
such that $\beta(g^L)\subseteq \alpha(g^R)$. Let $e',e''$ be the two
edges of $\beta(g^L)$. $e'$ must be on the front boundary
of some gadget $g'$ in $\R$. $e''$ must be on the front boundary of some
gadget $g''$ in $\R$. Clearly $g' \neq g''$.
(If $g' = g''$, we would have $\beta(g^L)\subseteq \alpha(g')$).
Then either $(g^L,g')$ or $(g^L,g'')$ must be a forbidden pair.
By Lemma \ref{lemma:forbidden}, $g^L$ cannot be added to $\R$.

(2) Let $g^L$ be a mirror fan. Suppose there exists a gadget $g^R\in \R$
such that $\beta(g^O)\subseteq \alpha(g^R)$ (see Fig \ref{fig:feasible} (3)).
Similar to the proof of the if part of (1), we can show
$\R\otimes g^L \in \PSR(G)$.
\end{proof}

By Lemma \ref{lemma:valid-fan}, the only way to add a fan $g^L$ to $\R$
is by the existence of a gadget $g^R \in \R$ such that $\beta(g^L)
\subseteq \alpha(g^R)$. For a mirror fan $g$, there is another
condition for adding $g$ to $\R$ which we discuss next.

Let $v_1=v_S,v_2, \ldots, v_{t-1},v_N$ be the vertices of $C=C(\R)$ from lower
to higher order. Let $e_1$ and $e_t$ be the first and the last edge of $C$.
Imagine we walk along $C$ from $v_S$ to $v_N$. On the right
side of $C$, we pass through a sequence of gadgets in $\R$ whose front
boundary (either a vertex or an edge) touches $C$. Let
$\mbox{support}(\R)=(g_1,g_2,\ldots,g_{k-1},g_k)$,
where $e_1 \in \alpha(g_1)$ and $e_t \in \alpha(g_k)$,
denote this gadget sequence. Note that some gadgets in
$\mbox{support}(\R)$ may appear multiple times in the sequence.
(See Fig \ref{fig:chain} (1) for an example.)

Consider a mirror fan $g^O$ to be added to $\R$. Note that $L=\beta(g^O)$
is a subsequence of $C$. Let $a$ and $b$ be the lowest and the highest vertex
of $L$. Let $e_l$ be the first edge and $e_h$ be the last edge of $L$. When walking
along $L$ from $a$ to $b$, we pass through a subsequence
of the gadgets in $\mbox{support}(\R)$ on the right of $L$.
Let $\mbox{support}(L, \R) =(g^B=g_p, g_{p+1},\ldots, g_{q-1},g_q=g^U)$
denote this gadget subsequence, where:

\begin{itemize}
\item $g^B$ is the gadget such that $e_l \in \alpha(g^B)$.
\item $g^U$ is the gadget such that $e_h \in \alpha(g^U)$.
\end{itemize}

In Fig \ref{fig:chain} (1), if we add a mirror fan $g_{3}$ with
$L=\beta(g_{3})=(a, b, f, h, k, d)$, then $\mbox{support}(L,\R)
=(g_1, g_0, g_2)$.

\begin{lemma}\label{lemma:mirror-fan}
Let $\R\in \PSR(G)$ and $C = C(\R)$ be its associated cut.
Let $g^O$ be a mirror fan and $L=\beta(g^O)$. Suppose that the conditions
A1 and A2 are satisfied for $g^O$ and $C$. Let
$\mbox{support}(L, \R)=(g^B, g_{p+1}, \cdots, g_{q-1}, g^U)$.
Then $g^O$ can be added to $\R$ (i.e. $g^O$ satisfies the condition A3)
{\bf if and only if} neither $(g^O, g^B)$ nor $(g^O, g^U)$ is a forbidden pair.
\end{lemma}

\begin{proof}
First suppose that $g^O$ can be added to $\R$ to form a larger partial slant $\REL$.
Then, by Lemma \ref{lemma:forbidden}, neither $(g^O, g^B)$ nor $(g^O, g^U)$
is a forbidden pair.
Conversely, suppose that neither $(g^O, g^B)$ nor $(g^O, g^U)$
is a forbidden pair. Let $e_1,\ldots, e_k$ be the edges of $L$.
The type of $\mbox{open-face}(e_1)$ is either $(1, 1)$ or $(1, 2)$.
The type of $\mbox{open-face}(e_k)$ is either $(1, 1)$ or $(2, 1)$.
(Fig \ref{fig:feasible} (4) shows an example.)
Let $e', e''$ be the two edges in $\alpha(g^O)$.
After adding $g^O$ to $\R$, the types of $\mbox{open-face}(e')$ and
$\mbox{open-face}(e'')$ becomes $(1, 2)$ and $(2, 1)$, respectively.
They still keep the slant property for $G_2$ faces. Moreover, for each
edge $e_i~(2\leq i\leq k-1)$, $\mbox{open-face}(e_i)$ becomes
a valid complete $G_2$ face after adding $g^O$ to $\R$. Hence
$\R \otimes g^O$  is a partial slant $\REL$ of $G$.
\end{proof}

\subsection{Connections, Chains and Backbones}
\label{subsec:chains}

Given an $\R \in \PSR(G)$ and a gadget $g$, it is straightforward to check
if the conditions in Lemmas \ref{lemma:valid-fan} and \ref{lemma:mirror-fan}
are satisfied. However, as described before, maintaining the set $\PSR(G)$
requires exponential time. So we must find a way to check the conditions in
Lemmas \ref{lemma:valid-fan} and \ref{lemma:mirror-fan} without explicit
representation of $\R$.

Consider two $\R,\R'\in\PSR(G)$ such that $\R \neq \R'$ but
$\mbox{support}(\R) =\mbox{support}(\R')$.
Clearly this implies $C(\R) = C(\R')$.
By Lemmas \ref{lemma:valid-fan} and \ref{lemma:mirror-fan},
a gadget $g$ can be added to $\R$ if and only if $g$ can be added to $\R'$.
Thus, whether $g$ can be added to an $\R\in \PSR(G)$ is completely determined
by the structure of gadgets in $\mbox{support}(\R)$. There may be exponentially
many $\R' \in \PSR(G)$ with $\mbox{support}(\R') =\mbox{support}(\R)$. Instead of
keeping information of all these $\R'$, we only need to keep the information
of the structure of $\mbox{support}(\R)$. This is the main idea for converting
Algorithm \ref{alg:face-addition} to a polynomial time algorithm.
In order to describe the structure of $\mbox{support}(\R)$,
we need the following terms and notations.

\begin{definition}\label{def:connection}
Let $\R \in \PSR(G)$ and $g$ be a gadget with $L=\beta(g)$.
\begin{itemize}
\item If $\mbox{support}(L, \R)$ contains only one gadget $g^R$, and the
conditions A1, A2 and A3 are satisfied, then $(g, g^R)$ is called a
{\em $\GG$-pair}. We use $(g^L, g^R)$ to denote a $\GG$-pair.

\item If $\mbox{support}(L,\R)$ contains at least two gadgets, and the conditions
A1, A2 and A3 are satisfied, then $(g^B, g, g^U)$ is called a {\em $\MM$-triple}.
We use $(g^B, g^O, g^U)$ to denote a {\em $\MM$-triple}.

\item A $\GG$-pair $(g^L,g^R)$ or a $\MM$-triple $(g^B,g^O,g^U)$ is
called a {\em connection} and denoted by $\Lambda$.

\item For a connection $\Lambda=(g^L, g^R)$,
where $g^L=g(v^L_l, v^L_m, v^L_h), g^R=g(v^R_l, v^R_m, v^R_h)$,
the {\em front boundary} of $\Lambda$, denoted by $\alpha(g^L, g^R)$
or $\alpha(\Lambda)$, is the concatenation of the paths
$\alpha(g^R)(v^R_l, v^L_l)$, $\alpha(g^L)$, $\alpha(g^R)(v^L_h, v^R_h)$.

\item For a connection $\Lambda=(g^B, g^O, g^U)$, where
$g^B=g(v^B_l, v^B_m, v^B_h),~g^O=g(v^O_l, v^O_m, v^O_h)$,
$g^U=g(v^U_l, v^U_m, v^U_h)$, the {\em front boundary} of $\Lambda$, denoted by
$\alpha(g^B, g^O,g^U)$ or $\alpha(\Lambda)$, is the concatenation of the paths
$\alpha(g^B)(v^B_l, v^O_l), \alpha(g^O), \alpha(g^U)(v^O_h, v^U_h)$.
\end{itemize}
\end{definition}


\begin{figure}[t]
\begin{center}
\includegraphics[width=0.5\textwidth, angle =0]{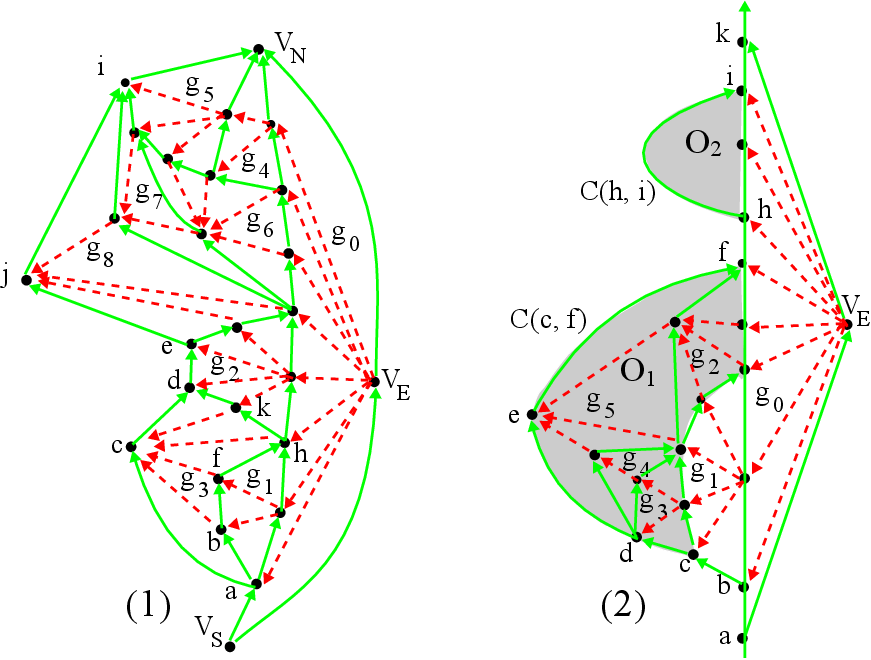}
  \centering
\caption{(1) $\R \in \PSR(G)$ is obtained by adding gadgets
$g_1, g_2, g_3, g_4, g_5, g_6, g_7,g_8$, in this order, to $g_0$.
$C(\R)=(v_S, a, c, d, e, j, i, v_N)$;
$\mbox{support}(\R) = (g_0, g_1, g_3, g_2, g_8, g_7, g_5, g_4, g_0)$ and
$(g_0 \preceq_C g_1 \preceq_C g_3 \preceq_C g_2 \preceq_C g_8 \preceq_C g_7 \preceq_C g_5 \preceq_C g_4\preceq_C g_0)$.
The pair $(g_0, g_1)$ belongs to the $\GG$-pair $\Lambda_1=(g_1,g_0)$,
the triple $(g_1, g_3, g_2)$ belongs to the $\MM$-triple
$\Lambda_2=(g_1, g_3, g_2)$, the triple $(g_2, g_8, g_7)$ belongs to the
$\MM$-triple $\Lambda_3=(g_2,g_8,g_7)$,
the pair $(g_7,g_5)$ belongs to the $\MM$-triple $\Lambda_4=(g_6,g_7,g_5)$,
the pair $(g_5, g_4)$ belongs to the $\GG$-pair $\Lambda_5=(g_5, g_4)$ and
the pair $(g_4, g_0)$ belongs to the $\GG$-pair $\Lambda_6=(g_4, g_0)$.
(2) The $\GG$-pair $\Lambda=(g_1, g_0)$ is a fractional connection
with two pockets: $\OO_1$ is bounded by $C(c,f)$ and $\alpha(\Lambda)(c,f)$
and $\OO_2$ is bounded by $C(h,i)$ and $\alpha(\Lambda)(h,i)$.
}
\label{fig:chain}
\end{center}
\vspace{-0.2in}
\end{figure}
It is tempting to think that if all gadgets in $\mbox{support}(\R)$ have
been added into $\tilde{G}$, then $\R$ has been constructed. Unfortunately,
this is not true. In order to form $\R$, the gadgets in $\mbox{support}(\R)
=(g_1, g_2,\ldots, g_{k-1}, g_k)$
must have been added to $\tilde{G}$ in the following way: When walking
along $C(\R)$ from $v_S$ to $v_N$, the gadgets in $\mbox{support}(\R)$
form a sequence $(\Lambda_1,\ldots,\Lambda_p)$
of connections such that each consecutive pair $(g_i,g_{i+1})$ or
triple $(g_{i-1},g_i,g_{i+1})$ of gadgets belong to a $\Lambda_j$
($1\leq j\leq p$); and each consecutive pair $\Lambda_i,\Lambda_{i+1}$
share a common gadget in $\mbox{support}(\R)$.
(See Fig \ref{fig:chain} (1) for an illustration).

Note that when the pair $(g_{i-1},g_i)$ and the pair $(g_i,g_{i+1})$
belong to the same connection $\Lambda_j$, it means $g_{i-1}$ and $g_{i+1}$
are the same gadget and $(g_i,g_{i+1})=(g_i,g_{i-1})$ is a $\GG$-pair.
In this case, we keep only one $\Lambda_j$ in the sequence
$\Lambda_1 \ldots, \Lambda_p$.
As seen in Fig \ref{fig:chain} (1), in addition to these connections
$\Lambda_j$ ($1 \leq j\leq p$), some gadget pairs (or triples) that are not
consecutive in $\mbox{support}(\R)$ may also form additional connections.
(In Fig \ref{fig:chain} (1), the gadgets $g_0$ and $g_2$ are not consecutive
in $\mbox{support}(\R)$. But they form a $\GG$-pair $(g_2,g_0)$). Let $\CON(\R)$
denote the set of connections formed by the gadgets in $\mbox{support}(\R)$.
(By this definition, each $\Lambda \in \CON(\R)$ has at least two
gadgets in $\mbox{support}(\R)$).
It is the structure of $\CON(\R)$ that determines if a new gadget $g$ can be
added to $\R$ or not. In general, the connections in $\CON(\R)$ cannot be
described as a simple linear structure. To describe it precisely, we need the
following definitions.

Consider a connection $\Lambda\in \CON(\R)$. If $\alpha(\Lambda) \cap C$
is a contiguous subpath of $C$, $\Lambda$ is called a {\em contiguous
connection}. If not, $\Lambda$ is called a {\em fractional connection}.
(In Fig \ref{fig:chain} (1), the $\GG$-pair $(g_1,g_0)$ is a
fractional connection.
Because the cut $C \cap \alpha(g_1, g_0)$ are
$(a, b, c)$ and $(f, h)$ and $(i, k)$, they are not a contiguous subpath of $C$.)
Consider a fractional connection
$\Lambda$. Let $u$ and $v$ be the lowest and the highest vertices
of $C\cap \alpha(\Lambda)$ respectively. When walking along $C$
from $u$ to $v$, we encounter $\alpha(\Lambda)$ multiple times.
The subpath $C(u,v)$ can be divided into a number of subpaths that are
alternatively on $\alpha(\Lambda)$, not on $\alpha(\Lambda), \ldots$,
on $\alpha(\Lambda)$. There exist at least two vertices $a,b$ in
$\alpha(\Lambda)$ such that $C(a, b)\cap \alpha(\Lambda)(a, b)=\{a, b\}$.
For each such pair of vertices $a,b$, the interior region bounded by
the subpaths $C(a, b)$ and $\alpha(\Lambda)(a, b)$ is called a {\em pocket},
denoted by $\OO=(C(a, b), \Lambda)$, of $\CON(\R)$.
Fig \ref{fig:chain} (2) shows a fractional connection $\Lambda$ (the $\GG$-pair
$(g_1, g_0)$) with two pockets $\OO_1$ and $\OO_2$.

A connection $\Lambda\in \CON(\R)$ is called
\emph{maximal} if it is not contained in any pocket of $\CON(\R)$.
A maximal connection can be either contiguous or
fractional. A non-maximal connection $\Lambda'\in \CON(\R)$ is either
completely contained in some pocket $\OO$ formed by a subpath of $C$
and a maximal fractional connection $\Lambda$ (namely all gadgets
of $\Lambda'$ are contained in $\OO$); or partially contained in $\OO$
(namely some gadget of $\Lambda'$ is contained in $\OO$
and some gadget of $\Lambda'$ is shared with $\Lambda$).
(In Fig \ref{fig:chain} (2), the $\GG$-pair $(g_3,g_1)$ and the $\MM$-triple
$(g_1,g_2,g_0)$ are partially contained in the pocket $\OO_1$. The $\MM$-triple
$(g_4,g_5,g_2)$ and the $\GG$-pair $(g_4,g_3)$ are completely contained in
$\OO_1$). Note that a pocket may contain other smaller pockets. In general,
the pockets of $\CON(\R)$ are nested in a forest-like structure.

The way to deal with fractional connections is very similar to contiguous connections.
Hence in the following paragraphs, we will assume there are no fractional connections.

\begin{definition}\label{def:lessthan}
Let two gadgets $\{g, g'\}$ belong to a connection $\Lambda$.
We say $g$ { \em precedes} $g'$ on $C$ and write
$g\preceq_C g'$ if the following conditions hold:
(1) $((\alpha(g)\cap C) \cup (\alpha(g')\cap C))$ is contiguous on $C$;
(2) When walking along $C$, we encounter the gadget $g$
before $g'$.
\end{definition}

Depending on the types of connections and their positions on a cut $C$,
there are five cases for the relation $\preceq_C$.
(They are shown in Fig \ref{fig:order}.)


\begin{figure}[t]
\begin{center}
\includegraphics[width=0.75\textwidth, angle =0]{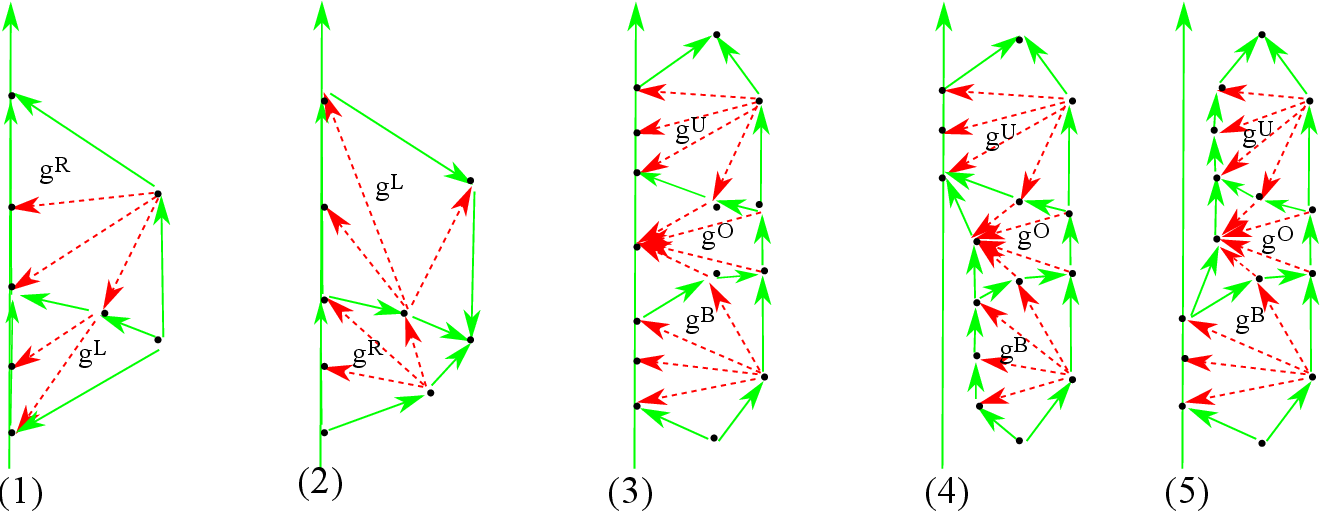}
  \centering
\caption{(1) Case 1: $(g^L, g^R)$ is a $\GG\mbox{-pair}$ and
$g_L\preceq_C g_R$;
(2) Case 2: $(g^L, g^R)$ is a $\GG\mbox{-pair}$ and
$g^R\preceq_C g^L$;
(3) Case 3: $(g^B, g^O, g^U)$ is a $\MM\mbox{-triple}$ and
$g^B \preceq_C g^O \preceq_C g^U$;
(4) Case 4: $(g^B, g^O, g^U)$ is a $\MM\mbox{-triple}$ and
$g^O \preceq_C g^U$;
(4) Case 5: $(g^B, g^O, g^U)$ is a $\MM\mbox{-triple}$ and
$g^B\preceq_C g^O$.}

\label{fig:order}
\end{center}
\end{figure}

\begin{definition}\label{def:chain}
Given a partial slant $\REL$ $\R$ with its associated cut $C$,
a sequence of gadgets $(g_1, g_2, \cdots, g_k)$ in $\mbox{support}(\R)$
is called a chain of $C$ and denoted by \emph{chain}($C$)
if the following conditions hold:
\begin{enumerate}
\item $(g_1\preceq_C g_2 \preceq_C \cdots \preceq_C g_k)$ and

\item for each $1\leq i\leq k-1$,
either $(g_i, g_{i+1})$ or $(g_i, g_{i+1}, g_{i+2})$ belongs to
a connection $\Lambda \in \CON(\R)$.
\end{enumerate}
\end{definition}

In Fig \ref{fig:chain} (1),
$(g_0 \preceq_C g_1 \preceq_C g_3 \preceq_C g_2 \preceq_C g_8 \preceq_C g_7 \preceq_C g_5 \preceq_C g_4 \preceq_C g_0)$
is a chain of $C$ where we have the $\GG$-pair $(g_1,g_0)$,
the $\MM$-triple $(g_1,g_3,g_2)$,
the $\MM$-triple $(g_2,g_8,g_7)$,
the $\MM$-triple $(g_6,g_7,g_5)$,
the $\GG$-pair $(g_5, g_4)$
and
the $\GG$-pair $(g_4, g_0)$.

Because the way a partial slant $\REL$ $\R$ is constructed,
the following property is clear.
\begin{property}\label{prop:chain}
Given a partial slant $\REL$ $\R$ with its associated cut $C$,
the $\mbox{support}(\R)=(g_1, g_2, \cdots, g_k)$
is a chain of $C$.
\end{property}

Given a partial slant $\REL$ $\R$ with its associated cut $C$,
if we can add a gadget $g$ to $\R$,
then it implies that back boundary $L=\beta(g)$ of $g$ is a part of $C$.
Let $\mbox{support}(L, \R)$ be a subsequence of $\mbox{support}(\R)$
consisting of gadgets in $\mbox{support}(\R)$ that touch $L$.
We can define an order $\preceq_L$ which is similar to $\preceq_C$.

\begin{definition}\label{def:lessthan-L}
Given a mirror fan $g^O$ with $L=\beta(g^O)$,
let two gadgets $\{g, g'\}$ belong to a connection $\Lambda$.
We say $g$ { \em precedes} $g'$ on $L$ and write
$g\preceq_L g'$ if the following conditions hold:
(1) $(\alpha(g)\cap L) \cup  (\alpha(g')\cap L)$ is contiguous on $L$;
(2) When walking along $L$, we encounter the gadgets $g$
before $g'$.
\end{definition}

\begin{definition}\label{def:backbone}
Let $g^O$ be a mirror fan with $L=\beta(g^O)$. A $\mbox{backbone}(L, \R)$
consists of a sequence of gadgets $(g_1=g^B, g_2, \cdots, g_k=g^U)$
in $\mbox{support}(L,\R)$ such that
\begin{enumerate}
\item $(g_1\preceq_L g_2 \preceq_L \cdots \preceq_L g_k)$,

\item for each $1\leq i\leq k-1$,
either $(g_i, g_{i+1})$ or $(g_i, g_{i+1}, g_{i+2})$ belongs to
a connection $\Lambda \in \CON(\R)$ and

\item Neither $(g^O, g^B)$ nor $(g^O, g^U)$ is a forbidden pair.
\end{enumerate}
\end{definition}

In Fig \ref{fig:chain} (1), consider the mirror fan $g_3$ with $L=\beta(g_3)$.
We have: $\mbox{support}(L)=(g_1, g_0, g_2)$
 where $\Lambda_1$ is the
$\GG$-pair $(g_1, g_0)$,
$\Lambda_2$ is the $\GG$-pair $(g_2, g_0)$ and
$g_1 \preceq_L g_0 \preceq_L g_2$.

Based on above discussion, we can restate Lemma \ref{lemma:mirror-fan}
as follows:

\begin{lemma}\label{lemma:valid-mirror-fan}
Let $\R\in \PSR(G)$ and $C = C(\R)$. Let $g^O$ be a mirror fan with
$L=\beta(g^O)$. Suppose that the conditions A1 and A2 are satisfied for
$g^O$ and $C$. Let $\mbox{support}(L, \R)=(g^B, g_{p+1}, \cdots, g_{q-1}, g^U)$.
Then $(g^B,g^O,g^U)$ forms a $\MM$-triple (i.e. $g^O$ satisfies the
condition A3) if and only if there exists a $\mbox{backbone}(L,\R)$
consisting of gadgets in $\mbox{support}(L,\R)$ and connections in $\CON(\R)$.
Note that each gadget and connection in $\mbox{backbone}(L, \R)$
belong to the same partial slant $\REL$ $\R$.
\end{lemma}


\section{Face-Addition Algorithm with Polynomial Time}\label{sec:impl}

We will present our polynomial time Face-Addition algorithm in this section.
In \S \ref{sec:algorithm},
we will describe the algorithm to find a superset of chains.
In \S \ref{sec:triple},
we will give more details of key procedures in \S \ref{sec:algorithm}.
In \S \ref{sec:conflict-rel},
we will present an example that Algorithm \ref{alg:efficient} may
combine two subchains of two different partial $\REL$s into a chain which only
satisfies the order $\preceq_C$ in Property \ref{prop:chain}
(there exist gadgets coming from different chains).
In \S \ref{sec:backtrack-conflict-gadgets},
we will describe a backtracking algorithm to check
whether whether a chain in the superset of chains constructed
by Algorithm \ref{alg:efficient} corresponds
to a slant $\REL$ or not.
Also, we will give runtime analysis of the backtracking algorithm.

\subsection{Polynomial Time algorithm}\label{sec:algorithm}

The polynomial time Face-Addition algorithm is described in
Algorithm \ref{alg:efficient}.

\begin{algorithm}[ht]
\caption{Face-Addition Algorithm with Polynomial Time}
\label{alg:efficient}

\KwIn {A proper triangular plane graph $G$}


Set $\tilde{\VV}=\{g_0=\FF(v_S,v_E,v_N)\}$ and $\hat{\VV} = \emptyset$\;

\Repeat{no such gadget $g$ can be found}
{
Find a gadget $g$ such that:\

\begin{description}
\item [either:] there exist v-$\GG$-pairs $(g,g^R)\notin \hat{\VV}$
with $g^R\in \tilde{\VV}$ (v-$\GG$-pairs are defined later)\;

add $g$ into $\tilde{\VV}$ (if it's not already in
$\tilde{\VV})$; add all such v-$\GG$-pairs $(g,g^R)$ into $\hat{\VV}$\;

\item[or:] $g$ is a mirror fan and there exist v-$\MM$-triples
$(g^B,g,g^U) \not\in \hat{\VV}$ with $g^B,g^U \in \tilde{\VV}$ (v-$\MM$-triples
are defined later)\;

add $g$ into $\tilde{\VV}$ (if it's not already
in $\tilde{\VV}$); add all such v-$\MM$-triples $(g^B,g,g^U)$ into $\hat{\VV}$\;
\end{description}
}

\uIf {$g_T=\MM(v_S,v_W,v_N)$ is not in $\tilde{\VV}$}
{
$G$ has no slant $\REL$\;
}
\Else{
Backtrack each v-chain of $g_T$ (v-backbone of $g_T$) to
check whether $G$ corresponds a slant $\REL$
in Algorithm \ref{alg:backtrack-conflict-gadgets};
}
\end{algorithm}

Algorithm \ref{alg:efficient} emulates the operations of Algorithm
\ref{alg:face-addition} without explicitly maintaining the set $\PSR(G)$.
Instead, it keeps two sets: (1) a set $\tilde{\VV}$ of gadgets of $G$
which contains the gadgets in the set $\tilde{G}$ defined in \S
\ref{sec:outline}, and (2) a set $\hat{\VV}$ of connections of $G$
which contains the connections in the set $\{\CON(\R)|\R\in \PSR(G)\}$ defined in \S
\ref{subsec:chains}.
In \S \ref{subsec:chains}, many concepts (cut, chain, backbone $\ldots$ etc.)
were defined referring to a $\R\in \PSR(G)$.
We now need counterparts of these concepts without referring to
a specific $\R$. For a concept {\em x} defined previously, we will use
{\em virtue x} or simply {\em v-x} for the counterpart of $x$.
(For example, v-cut for {\em virtue cut}, v-chain for {\em virtue chain},
v-backbone for {\em virtue backbone}).
A v-$\GG$-pair (v-$\MM$-triple, respectively) is similar to a $\GG$-pair
($\MM$-triple, respectively) but without referring to a specific $\R \in  \PSR(G)$.
Whenever Algorithm \ref{alg:face-addition} adds a gadget $g$ to
$\tilde{G}$ through a $\GG$-pair (or a $\MM$-triple, respectively),
Algorithm \ref{alg:efficient} adds $g$ into $\tilde{\VV}$ and add a
corresponding v-$\GG$-pair (or v-$\MM$-triple, respectively) into $\hat{\VV}$.

Initially, $\hat{\VV}$ is empty and $\tilde{\VV}$ contains only the
initial fan $g_0=\FF(v_S,v_E,v_N)$. In each step, the algorithm
finds either new v-$\GG$-pairs $(g,g^R)$ with $g^R\in \tilde{\VV}$;
or new v-$\MM$-triples $(g^B,g,g^U)$ with $g^B,g^U\in \tilde{\VV}$.
In either case, it adds $g$ into $\tilde{\VV}$.
But instead of using a $\R \in \PSR(G)$, Algorithm \ref{alg:efficient}
relies on the information stored in $\tilde{\VV}$ and $\hat{\VV}$ to
find v-$\GG$-pairs and v-$\MM$-triples.

Fix a step in Algorithm \ref{alg:efficient} and consider the sets
$\tilde{\VV}$ and $\hat{\VV}$ after this step.
Any simple path $C$ in $G$ from $v_S$ to $v_N$ is called a {\em v-cut}
of $G$. A gadget pair $(g,g^R)$ is called a {\em v-$\GG$-pair} if
$g^R \in \tilde{\VV}$ and $\beta(g) \subseteq \alpha(g^R)$.
For a v-cut $C$, define:
\begin{eqnarray*}
\tilde{\VV}(C) &=&
\{ g \in \tilde{\VV}~|~\alpha(g) \mbox{ intersects } C\}\\
\hat{\VV}(C) & = &
\{ \Lambda \in \hat{\VV}~|\mbox{ the frontiers of at least two gadgets of }\Lambda
\mbox{ intersect } C\}
\end{eqnarray*}

Let $e_1$ and $e_t$ be the first and the last edge of $C$.
A subset of gadgets $D \subseteq \tilde{\VV}(C)$
is called a {\em v-support} of $C$ if the following conditions hold:
\begin{itemize}
\item The gadgets in $D$ can be arranged into a sequence $(g_1,g_2,\ldots, g_k)$
such that
$e_1 \in \alpha(g_1)$, $e_t \in \alpha(g_k)$ and, when walking along
$C$ from $v_S$ toward $v_N$, we encounter these gadgets in this order.

\item Any two (or three) consecutive gadgets $(g_i,g_{i+1})$
(or $(g_{i-1},g_i,g_{i+1})$) belong to a connection in $\hat{\VV}$.
\end{itemize}

If a set $S$ of connections formed by the gadgets in a v-support of $C$
satisfies the structure property described in Definition \ref{def:chain},
$S$ is called a {\em v-chain} of $C$.  Clearly, any chain is also a
v-chain.

Let $g$ be a mirror fan with $L=\beta(g)$. Let $a$ and $b$ be
the lowest and the highest vertex of $L$, and $e_l$ and $e_h$
the first and the last edge of $L$, respectively. Define:
\begin{eqnarray*}
\tilde{\VV}(L) &=&
\{ g \in \tilde{\VV}~|~\alpha(g) \mbox{ intersects } L\}\\
\hat{\VV}(L) & = &
\{ \Lambda \in \hat{\VV}~|\mbox{ the frontiers of at least two gadgets of }\Lambda
\mbox{ intersect } L\}
\end{eqnarray*}

A subset of gadgets $D \subseteq \tilde{\VV}(L)$
is called a {\em v-support} of $L$ if the following conditions hold:
\begin{itemize}
\item The gadgets in $D$ can be arranged into a sequence $(g^B=g_p,g_2,\ldots,
g_q=g^U)$ such that
$e_l\in\alpha(g^B)$, $e_h \in \alpha(g^U)$ and, when walking along $L$
from $a$ toward $b$, we encounter these gadgets in this order.

\item Any two (or three) consecutive gadgets $(g_i,g_{i+1})$
(or $(g_{i-1},g_i,g_{i+1})$) belong to a connection in $\hat{\VV}$.
\end{itemize}

If a set $S$ of connections formed by the gadgets in a v-support of $L$
only satisfies the order property $\preceq_L$ and
the third property described in Definition \ref{def:backbone},
$S$ is called a {\em v-backbone} of $L$. If there is a
v-$\mbox{backbone}(L)$, we call $(g^B,g,g^O)$ a {\em v-$\MM$-triple}.
Both v-$\GG$-pairs and v-$\MM$-triples are called v-connections.

First, we bound the number of loop iterations in Algorithm \ref{alg:efficient}.
By Lemma \ref{lemma:number}, the number of gadgets in $G$ is at most
$N=O(n^2)$. So the number of v-$\GG$-pairs is at most $O(N^2)$ and the
number of v-$\MM$-triples is at most $O(N^3)$. Hence $\hat{\VV}$ contains
at most $O(n^6)$ elements. Since each iteration adds at least either
a v-$\GG$-pair or a v-$\MM$-triple into $\hat{\VV}$, the number of
iterations is bounded by $O(n^6)$.

We need to describe how to perform the operations in the loop body,
which is clearly dominated by finding v-$\GG$-pairs and
finding v-$\MM$-triples. Given two gadgets $g$, $g^R$ and the sets
$\tilde{\VV}$ and $\hat{\VV}$, it is easy to check if $(g,g^R)$
is a v-$\GG$-pair (i.e. $g^R \in \tilde{\VV}$ and $\beta(g)\subseteq \alpha(g^R)$)
in polynomial time. However,
finding v-$\MM$-triples ($g^B,g,g^U)$ is
much more difficult. In \S \ref{sec:triple}, we show this can be done,
in polynomial time, by finding a v-$\mbox{backbone}(\beta(g))$ consisting of
connections in $\hat{\VV}$. This will establish the polynomial run time of
the repeat loop of Algorithm \ref{alg:efficient}.
\begin{lemma}\label{lemma:sufficient}
Let $\SSS$ be the set of all v-backbones of $g_T$.
(Because $L=\beta(g_T)$ is a v-cut,
each v-backbone of $L$ is actually a v-chain of $G$.)
For each $\R\in \PSR(G)$ with its associated cut $C=C(\R)$,
there exists a v-chain $S\in \SSS$ (which is a v-backbone
of $L$) generated by Algorithm \ref{alg:face-addition}
such that $S=\mbox{chain}(C)$.
\end{lemma}

\begin{proof}
For each mirror fan $g$ with $L=\beta(g)$,
if $g$ is in some partial slant $\REL$,
then its backbone follows the order $\preceq_L$.
So we have $\tilde{G} \subseteq \tilde{\VV}$ and
$\{\CON(\R)|\R\in \PSR(G)\} \subseteq \hat{\VV}$.
Since we use the two supersets $\tilde{\VV}$ and $\hat{\VV}$
of $\tilde{G}$ and $\hat{G}$ to find v-backbones of $g_T$ (chains of $G$),
we immediately have:

$\{\mbox{chain}(C)|\R \mbox{ is a slant } \REL
\mbox{ of } G \mbox{ with } \mbox{its } \mbox{associated } \mbox{cut }
C=C(\R)\} \subseteq \SSS.$
\end{proof}

In this subsection, we have described Algorithm \ref{alg:efficient}
which constructs
(1) the set $\SSS$ of v-chains such that
chain($C$) of each partial $\REL$ $\R\in \PSR(G)$ with its
associated cut $C=C(\R)$ is included in $\SSS$,
(2) the set $\hat{\VV}$ of v-connections containing
each connection $\Lambda \in \{\CON(\R)|\R\in \PSR(G)\}$,
(3) the set $\tilde{\VV}$ of gadgets containing
each gadget $g\in \tilde{G}$.

Lemma \ref{lemma:sufficient} states that any chain is a v-chain.
But the reverse is not necessarily true.
In \S \ref{sec:conflict-rel}, we provide
an example that a v-chain
is not equal to the chain of
the associated cut of any partial slant $\REL$.
Thus we need to check whether a v-chain constructed by
Algorithm \ref{alg:efficient} is really
a chain of a slant $\REL$ $\R$ of $G$.
In \S \ref{sec:backtrack-conflict-gadgets},
we describe a backtracking algorithm to detect all such v-chains.

\subsection{Algorithm for Finding v-Backbones and v-$\MM$-triples}
\label{sec:triple}

Consider a gadgets $(g^B, g^O, g^U)$ with the back boundary $L=\beta(g^O)$.
Let $a$ and $b$ be the lowest and the highest vertex of $L$,
$e_1$ and $e_2$ the first
and the last edge of $L$. In this subsection,
we show how to check whether $(g^B, g^O, g^U)$ is a v-$\MM$-triple or not
in polynomial time.
By the definition of v-$\MM$-triples,
this is equivalent to finding
$\mbox{v-backbone}(L)$s by using the v-connections in $\hat{\VV}$.

Let $\tilde{\VV}(L)$ be the set of gadgets in $\tilde{\VV}$ that can be in
any $\mbox{v-backbone}(L)$. From the conditions described in
Definition \ref{def:backbone}, $\tilde{\VV}(L)$ contains the gadgets
$g \in \tilde{\VV}$ that satisfy the following conditions:

\begin{itemize}
\item The front boundary of $g$ intersects $L$ and $g$ belongs to
some v-connection $\Lambda \in \hat{\VV}$.

\item the front boundary of $g^B$ contains $e_1$;
and $(g^O,g^B)$ is not a forbidden pair.

\item the front boundary of $g^U$ contains $e_2$;
and;
and $(g^O,g^U)$ is not a forbidden pair.
\end{itemize}

To determine which gadgets in $\tilde{\VV}(L)$ can form
a v-backbone of $L$, we construct a directed acyclic graph
$\HH_L=(V_L,E_L)$  as follows:

\begin{definition}\label{gra:L}
Given a triple $(g^B, g^O, g^U)$ with $L=\beta(g^O)$,
the \emph{backbone graph} $\HH_L$ of $g^O$ is defined as follows:
\begin{itemize}

\item $V_L=\{(l, g)| l=L\cap \alpha(g) \mbox{ and }
g\in \tilde{\VV}(L) \}$ and

\item $E_L=\{(l_1, g_1)\rightarrow (l_2, g_2)| $

$\{g_1, g_2\}$
belongs to a v-connection $\Lambda\in \hat{\VV}(L)$ and
$l_1\cup l_2$ is contiguous on $L$ $\}$

\end{itemize}
\end{definition}
A \emph{source} (\emph{sink}, respectively) vertex
has no incoming (outgoing, respectively) edges in $\HH_L$.
The intuitive meaning of a directed path $P\in \HH_{L=\beta(g^O)}$
from the source to the sink
is that $P$ corresponds to a v-backbone of $L=\beta(g^O)$ and
for each vertex $(l, g) \in P$, $g$ corresponds a gadget in a v-backbone
and $l$ is equal to the intersection $L\cap \alpha(g)$ of $L$
and the front boundary of $g$.
Moreover, for different v-$\MM$-triples of $g^O$
$\Lambda_1=(g^{B_1}, g^O, g^{U_1}),
\Lambda_2=(g^{B_2}, g^O, g^{U_2})\in \hat{\VV}(L)$,
we have vertices $v_O=(L\cap \alpha(g^O), g^O)$ and $v'_O=(L\cap \alpha(g^O), g^O)$
in $\HH_L$ to represent $g^O$
such that $\Lambda_1$ and $\Lambda_2$ represent different subpaths in $\HH_L$:
one is
$(L\cap \alpha(g^{B_1}), g^{B_1})
\rightarrow v_O=(L\cap \alpha(g^O), g^O) \rightarrow (L\cap \alpha(g^{U_1}), g^{U_1})$
and the other one is
$(L\cap \alpha(g^{B_2}), g^{B_2})
\rightarrow v'_O=(L\cap \alpha(g^O), g^O) \rightarrow (L\cap \alpha(g^{U_2}), g^{U_2})$
where the vertex $v_O$ represents the mirror fan in $\Lambda_1$
and $v'_O$ represents the mirror fan in $\Lambda_2$.

\begin{lemma}\label{lemma:GL}
$\HH_L$ is acyclic and can be constructed in $O(|\hat{\VV}|^2)$ time.
\end{lemma}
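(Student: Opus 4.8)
First I would prove acyclicity. The key structural fact is that the order $\preceq_L$ on connections in $\Lambda(L)$ is induced by the linear order of the vertices $v_1, v_2, \ldots, v_t$ along $L$ (from lower to higher). The plan is to assign to each connection $\Lambda \in V_L$ a numerical interval $I(\Lambda) = [\mathrm{lo}(\Lambda), \mathrm{hi}(\Lambda)]$, where $\mathrm{lo}(\Lambda)$ is the index of the lowest vertex of $\alpha(\Lambda) \cap L$ and $\mathrm{hi}(\Lambda)$ is the index of the highest vertex of $\alpha(\Lambda)\cap L$. By the definition of $\Lambda_1 \preceq_L \Lambda_2$ (an edge in $E_L$), the two connections must share a common gadget $h$ that is the higher-gadget of $\Lambda_1$ and the lower-gadget of $\Lambda_2$; consequently $\alpha(h)\cap L$ occupies the highest part of $\alpha(\Lambda_1)\cap L$ and the lowest part of $\alpha(\Lambda_2)\cap L$. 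This forces $\mathrm{hi}(\Lambda_1) = \mathrm{lo}(\Lambda_2)$, so every edge $\Lambda_1 \rightarrow \Lambda_2$ strictly increases the lower endpoint: $\mathrm{lo}(\Lambda_1) < \mathrm{lo}(\Lambda_2)$. Since a strictly increasing integer-valued potential cannot return to its starting value, no directed cycle can exist, and $G_L$ is acyclic. I would verify the strictness claim by noting that each connection's front-boundary spans at least two vertices of $L$ (each gadget in a connection contributes at least one edge of $\alpha$ to $L$, and two gadgets of $\Lambda$ must intersect $L$ by the definition of $\Lambda(L)$), so $\mathrm{lo}(\Lambda) < \mathrm{hi}(\Lambda)$ and the shared-vertex gluing genuinely advances along $L$.

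Second I would bound the construction cost. The vertex set is $V_L = \Lambda(L) - \mathcal{D}(e_1) - \mathcal{D}(e_{t-1})$, a subset of the connections in $\hat{G}$ whose front-boundaries meet $L$. Since $|\hat{G}| = O(n^6)$ by the counting already done in the excerpt, enumerating candidate vertices and testing membership in $\Lambda(L)$ (checking that at least two of a connection's gadgets have front-boundary intersecting $L$) and in the excluded sets $\mathcal{D}(e_1), \mathcal{D}(e_{t-1})$ takes polynomial time per connection, as each such test only inspects the $O(n)$ vertices of $L$ against the $O(n)$ vertices of the connection's front-boundary. For the edge set, I would argue that for each ordered pair $(\Lambda', \Lambda'')$ of vertices in $V_L$, deciding whether $\Lambda' \preceq_L \Lambda''$ reduces to checking whether they share a common gadget $h$ that is simultaneously the higher-gadget of $\Lambda'$ and the lower-gadget of $\Lambda''$ and whether their front-boundary intersections with $L$ glue contiguously—both computable in polynomial time. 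There are $O(|V_L|^2)$ such pairs, so the entire digraph is built in polynomial time.

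I expect the main obstacle to be the strictness of the monotone potential, and more precisely making airtight the claim that the shared-gadget condition in the definition of $\preceq_L$ truly forces $\mathrm{hi}(\Lambda_1) = \mathrm{lo}(\Lambda_2)$ across all four cases enumerated in the intermediate-connection definition (the $\GG$-pair/$\GG$-pair, $\GG$-pair/$\MM$-triple, $\MM$-triple/$\GG$-pair, and $\MM$-triple/$\MM$-triple combinations of Figure \ref{fig:support}). In each case one must confirm that the higher-gadget of $\Lambda_{i-1}$ and the lower-gadget of $\Lambda_i$ coincide and that this shared gadget's front-boundary sits at the interface vertex, so that the intervals $I(\Lambda_{i-1})$ and $I(\Lambda_i)$ meet exactly at one vertex rather than overlapping or leaving a gap. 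Once this case analysis is secured, acyclicity follows immediately from the strictly increasing potential, and the polynomial bound is routine given the $O(n^6)$ size of $\hat{G}$.
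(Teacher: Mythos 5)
Your proposal takes essentially the same route as the paper's proof: the paper likewise obtains acyclicity directly from the fact that $\preceq_L$ follows the walk along $L$ (it simply asserts ``$\preceq_L$ is acyclic''), and it constructs $E_L$ by pairwise testing the contiguity/shared-gadget/lower-versus-higher conditions over the polynomially many connections, so your version just supplies the justification the paper leaves implicit. One small correction to your potential argument: the shared gadget $h$ contributes its \emph{entire} interval $\alpha(h)\cap L$ to both connections, so in general $\mathrm{hi}(\Lambda_1)=\mathrm{hi}(\alpha(h)\cap L)$ while $\mathrm{lo}(\Lambda_2)=\mathrm{lo}(\alpha(h)\cap L)$ --- the two intervals overlap along $h$'s stretch of $L$ rather than meeting at a single vertex --- but monotonicity (and hence acyclicity) survives if you instead track, say, the index of the lowest vertex of $\alpha(\mbox{higher-gadget of }\Lambda)\cap L$.
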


\begin{proof}
Consider a $g \in \tilde{\VV}$. Knowing $L$, we can easily determine
if $g$ is in $\tilde{\VV}(L)$ in constant time. So we can identify
the set $V_L$ in $O(|\hat{\VV}|)$ time.
For two vertices $(l, g)$ and $(l', g')$ in $V_L$, the edge
$(l, g) \rightarrow (l', g')$ exists if and only if the following
two conditions are satisfied:
(1) $g$ and $g'$ belong to some v-connection in $\hat{\VV}(L)$;
(2) $l\cup l'$ is contiguous on $L$; and
(2) when walking along $L$ upwards, we encounter the gadgets $g$
before $g'$.
These two conditions can be easily checked in constant time.
So the set $E_L$ can be determined
in $O(|V_L|^2)=O(|\hat{\VV}|^2)$ time. Thus $\HH_L$ can be constructed
in $O(|\hat{\VV}|^2)$ time.

The edge directions of $\HH_L$ are defined by the relation $\preceq_L$.
Since $\preceq_L$ is acyclic, $\HH_L$ is acyclic.
\end{proof}


\begin{lemma}\label{lemma:feasibility}
Given a triple $(g^B, g^O, g^U)$ with $L=\beta(g^O)$, let $\HH_L$ be the graph
defined in Definition \ref{gra:L}.

\begin{enumerate}
\item Each directed path from the source to the sink
in $\HH_L$ corresponds to the v-$\MM$-triple $(g^B, g^O, g^U)$.



\item The v-$\MM$-triple $(g^B, g^O, g^U)$ corresponds to a set of
directed paths from $g^B$ to $g^U$ in $\HH_L$.

\end{enumerate}
\end{lemma}

\begin{proof}
Statement 1.
Consider any directed path $(l^B, g^B) \rightarrow \cdots
\rightarrow (l^U, g^U)$ from the source $(l^B, g^B)$ to the sink
$(l^U, g^U)$ in $\HH_L$.
Since each directed edge $(l, g)\rightarrow (l', g')$ in $\HH_L$
follows the order $\preceq_L$ on $L$,
each directed path from $(l^B, g^B)$ to $(l^U, g^U)$ is a v-backbone of $g^O$
and $(g^B,g^O,g^U)$ is a v-$\MM$-triple.

Statement 2: Consider a v-$\MM$-triple $(g^B,g^O,g^U)$. This means
that there exists a v-backbone
$g^B\preceq_L g_2 \preceq_L \cdots \preceq_L g_{k-1}\preceq_L g^U$ on $L$.
Because each $g\preceq_L g'$ on $L$ is a directed edge $(l, g)\rightarrow (l', g')$
in $\HH_L$,
we have that
$(l^B, g^B)\rightarrow (l_2, g_2) \rightarrow \cdots \rightarrow (l_{k-1}, g_{k-1}) \rightarrow (l^U, g^U)$
is a directed path from $(l^B, g^B)$ to $(l^U, g^U)$ in $\HH_L$.

Note that there may exist multiple paths in $\HH_L$ from $(l^B, g^B)$ to
$(l^U, g^U)$. All these paths correspond to the same v-$\MM$-triple
$(g^B,g^O,g^U)$. The intuitive meaning of this fact is as follows.
When we add $g^O$ via the v-$\MM$-triple $(g^B,g^O,g^U)$, even though
the gadgets $g^B$ and $g^U$ are fixed, the v-connections and gadgets in the
$\mbox{v-backbone}(L)$s between $g^B$ and $g^U$
may be different. But as long as they form a valid
$\mbox{v-backbone}(L)$, we can add $g^O$.
\end{proof}

The following Algorithm \ref{alg:triple}  finds
v-$\MM$-triples $(g^B, g^O, g^U)$ by finding $\mbox{v-backbone}(L)$s.

\begin{algorithm}[htb]
\caption{~Find v-$\MM$-triples}
\label{alg:triple}

\KwIn {A triple $(g^B, g^O, g^U)$ with $L=\beta(g^O)$ and the set $\hat{\VV}$ of
v-connections}

From the connections in $\hat{\VV}$, identify the set $\hat{\VV}(L)$\;

Construct the directed graph $\HH_L$ as in Definition \ref{gra:L}\;

By using Lemma \ref{lemma:feasibility},
return whether $(g^B, g^O, g^U)$ is $v$-$\MM$-triple or not\;
\end{algorithm}

\begin{theorem}\label{thm:time}
Given a gadgets triple $(g^B, g^O, g^U)$,
Algorithm \ref{alg:triple} can successfully test whether $(g^B, g^O, g^U)$
is a v-$\MM$-triple  in polynomial time.
\end{theorem}

\begin{proof}
The correctness of the algorithm follows from Lemma
\ref{lemma:feasibility}. By Lemma \ref{lemma:GL}, the steps
1 and 2 can be done in polynomial time.

Step 3:
Since $\HH_L$ is acyclic,
we can use breadth-first search to find whether
$(l^U, g^U)$ is reachable from $(l^B, g^B)$.
Then $(g^B, g, g^U)$ is
a v-$\MM$-triple if and only if
$(l^U, g^U)$ is reachable from $(l^B, g^B)$.
This step is carried out by calling breadth-first search which
takes polynomial time.
So the total time for this step is polynomial.

Note that the total number of source to sink paths in $\HH_L$ can be exponential.
However, we only
need to find one path from $(l^B, g^B)$ to $(l^U, g^U)$.

\end{proof}

\subsection{An Example that A v-Chain Does Not Have
A Slant $\REL$}\label{sec:conflict-rel}
In this subsection, we present an example to show
why a v-chain defined in the last subsection does not necessarily
have a corresponding partial slant $\REL$.
Imagine that we have two v-chains $\mathcal{C}$ and $\mathcal{C}'$.
Suppose that $\mathcal{C}$ can be partitioned into
$\mathcal{C}=(\mathcal{C}_1, \mathcal{C}_2, \mathcal{C}_3)$ and
$\mathcal{C}'$ can be partitioned into
$\mathcal{C}'=(\mathcal{C}'_1, \mathcal{C}'_2, \mathcal{C}'_3)$
such that $\mathcal{C}_2=\mathcal{C}'_2$,
then we may have another two v-chains
$(\mathcal{C}_1, \mathcal{C}_2=\mathcal{C}'_2, \mathcal{C}'_3)$
and $(\mathcal{C}'_1, \mathcal{C}_2=\mathcal{C}'_2, \mathcal{C}_3)$.
However, both of the two v-chains can't correspond any partial slant $\REL$.
Fig \ref{fig:conflict} (3) shows a v-chain $(g_1, g_2, g_3, g_4, g_5, g_6)$
which can't have a corresponding partial slant $\REL$ where
$(g_1, g_2, g_3)$ from $\R_1$ and $(g_3, g_4, g_5, g_6)$ from $\R_2$ shares a common gadget $g_3$.




\begin{figure}[t]
\centering
\includegraphics[width=0.85\textwidth, angle =0]{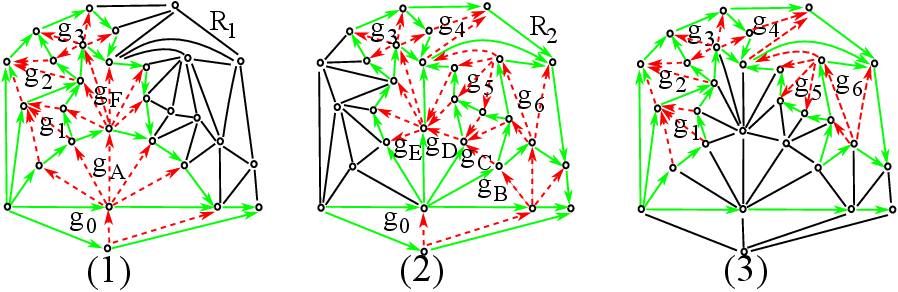}
\caption{(1) is a partial slant $\REL$ $\R_1 $ which
consists of gadgets $\{g_0, g_A, g_F, g_1, g_2, g_3\}$;
(2) is a partial slant $\REL$ $\R_2$ which
consists of gadgets $\{g_0, g_3, g_4, g_5, g_6, g_B, g_C, g_D, g_E\}$;
(3) $(g_1, g_2, g_3, g_4, g_5, g_6)$ is a v-chain where
$(g_1, g_2, g_3)$ is a subchain of $\R_1$ and
$(g_3, g_4, g_5, g_6)$ is a subchain of $\R_2$.
However, the v-chain is not coming from the same partial slant $\REL$.
$\{g_1, g_2, g_3\}$ can be added into $\R_1$ only when $\{g_0, g_A\}$ have been
added into $\R_1$.
The order of added gadgets in $\R_2$ is:
$(g_0, g_B, g_6, g_5, g_C, g_D, g_E, g_3, g_4)$.
But, $g_A$ and each gadget of $\{g_B, g_C, g_D, g_E\}$ can not coexist in the
same $\REL$ because
some faces of $g_A$ and each gadget of $\{g_B, g_C, g_D, g_E\}$
overlap.
}\label{fig:conflict}
\end{figure}

\subsection{An Algorithm to Find Conflicting Gadgets via Backtracking}\label{sec:backtrack-conflict-gadgets}

In the last subsection,
we know that each v-chain of $g_T$ only contains partial information of a
complete slant $\REL \R$.
In this subsection, we use a recursive constructive definition
to define a \emph{hierarchal v-chain}
which represents
sufficient information of a complete $\REL$ $\R$ and
can be represented by a DAG
as follows:
\begin{definition}\label{def:h-v-chain}
Given the final mirror fan $g_T$ with $C=\beta(g_T)$,
a \emph{hierarchal v-chain} $\JJ=(V(\JJ), E(\JJ))$ of $C$ is a DAG
recursively defined as follows:
\begin{enumerate}
\item The \emph{root} $\JJ(r)\in V(\JJ)$
is a sequence of pairs $((C_1, g_1), (C_2, g_2), \cdots, (C_k, g_k))$
where
\begin{enumerate}
\item $(g_1, g_2, \cdots, g_k)$ is a v-chain
$(g_1\preceq_C g_2 \preceq_C \cdots \preceq_C g_k)$ of $C$ and

\item each $C_i=C\cap \alpha(g_i), 1\leq i\leq k,$ is a portion of the front boundary $\alpha(g_i)$ of $g_i$.
\end{enumerate}

\item $\mbox{While}(\alpha(g_0) \nsubseteq C)$

\begin{enumerate}
\item select a gadget $g$ such that
\begin{enumerate}
\item $\alpha(g) \subseteq C$ and
\item there exist a sequence of pairs
$((l_1, g)\in S_1, (l_2, g)\in S_2, \cdots, (l_h, g)\in S_h)$
where each $S_i, 1\leq i\leq h,$ is a vertex of $\JJ$ and
$l_1\cup l_2 \cup \cdots \cup l_h = \alpha(g)$,
\end{enumerate}


\item create a vertex $S$ consisting of a sequence of pairs
$((l_1, g_1), (l_2, g_2), \cdots, (l_h, g_h))$
where
\begin{enumerate}
\item $(g_1\preceq_L g_2\preceq_L \cdots \preceq_L g_h)$
is a v-backbone of $L=\beta(g)$ and
\item each $l_i=L\cap \alpha(g_i), 1\leq i\leq h,$ is a portion
of the front boundary $\alpha(g)$ of $g$,
\end{enumerate}

\item add $S$ into $V(\JJ)$ and
for each $S_i, 1\leq i\leq h$,
add an arc $S_i \rightarrow S$ into $E(\JJ)$. And,

\item change $C$ to $C(v_S, a) \cup \beta(g) \cup C(b, v_N)$
where $a$ and $b$ are the first and the last vertices of $\beta(g)$, respectively.

\end{enumerate}

\end{enumerate}

\end{definition}
Intuitively a hierarchal v-chain $\JJ$ is a hierarchal decomposition of
a complete slant $\REL$ $\R$ and
the root $\JJ(r)$ of $\JJ$ represents a chain($C$) of $\R$'s associated cut $C=C(\R)$.
In the following definition,
a hierarchial structure $\HH$ consists of a set of DAGs (backbone graphs)
and $\HH$ can implicitly store all possible hierarchal v-chains $\JJ$.

\begin{definition}
$\HH=(V(\HH), E(\HH))$ is a DAG where

\begin{enumerate}

\item for each vertex $v\in V(\HH)$, $v$ represents a DAG
$\HH(v)=(V(\HH(v)), E(\HH(v)))$ over $V(\HH(v))$ where
\begin{enumerate}
\item every vertex $w \in V(\HH(v))$
is a pair $(l, g)$ and $l$ is a portion
of the front boundary $\alpha(g)$ of $g$,

\item for each arc $e=(u_e\rightarrow u'_e)$ in $E(\HH(v))$, $e$ associates with a DAG $\HH(v'), v'\in V(\HH)$
(the associated DAG of $e$ is denoted by $\HH(e)$),
the source of $\HH(e)$ is the starting vertex $u_e$ of $e$ and
the sink of $\HH(e)$ is the ending vertex $u'_e$ of $e$.
\end{enumerate}

\item an ordered pair $(u, v)$ belongs to $E(\HH)$ if there exists an arc $e$ in $E(\HH(u))$ such that
$e$'s associated DAG $\HH(e)$ is equal to $\HH(v)$.
\end{enumerate}
We call an ordered pair vertices $(u, v) \in E(\HH)$ a $\emph{super arc}$ of $\HH$.
Also, for each arc $e\in E(\HH(u))$,
let $\HH_e$ be the maximal subgraph of $\HH$ which
can be reached from $\HH(e)$ via super arcs.
\end{definition}



From now on,
(1) when we mention a DAG $\HH(e)$ from an arc, it means that
the arc $e$ is in the DAG represented by a vertex in $V(\HH)$,
(2) when we mention a DAG $\HH(v)$ represented by a vertex $v$,
it means that the vertex $v$ is a vertex in $V(\HH)$, and
(3) we use the term $\HH(e_C)$ to represent the DAG
in the root of the hierarchal structure $\HH$.

Given a fan $g$ and an arc $e=(l_1, g_1)\rightarrow (l_2, g_2)\in E(\HH(v)),$
(1) $e$ is a \emph{complete arc} on $g$ if
$g=g_1=g_2$,
(2) $e$ is a \emph{left partial arc} on $g$ if
$g\neq g_1$ and $g=g_2$,
(3) $e$ is a \emph{right partial arc} on $g$ if
$g=g_1$ and $g\neq g_2$, and
(4) $e$ is \emph{minimal} if
$l_1\cup l_2$ is a contiguous path.
Algorithm \ref{alg:backtrack-conflict-gadgets}
emulates the growing process of all hierarchal v-chains $\JJ$ as follows:
\begin{enumerate}
\item add the root $v$ into $\HH$ and
let $\HH(r)$ be the backbone graph $\HH_{C=\beta(g_T)}$
of $g_T$.
Now, $V(\HH)=\{r\}$ and
the root $\JJ(r)$ of each hierarchal v-chain $\JJ$ is a directed path $P\in \HH(v)$,
and vice versa.

\item iteratively selects a gadget $g$
(to be defined in Definition \ref{def:removable-gadget})
such that
\begin{description}
\item [if] $g$ is a mirror fan and has a path
$P=(\cdots, (l^B, g^B), (l=\alpha(g), g), (l^U, g^U), \cdots) \in \HH(v)$,
(1) add a vertex $v'$ into $V(\HH)$,
(2) let $\HH(v')$ be the backbone graph $\HH_{L'=\beta(g)}$ of $g$,
(3) change $P$ to $(\cdots, (l^B, g^B), (l^U, g^U), \cdots) \in \HH(v)$,
(4) add a super arc from $v$ to $v'$ in $E(\HH)$ and
(5) let $\HH(e')$ be $\HH(v')$ where $e'=(l^B, g^B) \rightarrow (l^U, g^U)$.
The backbone graph of $g$ is embedded into $\HH(e')$.
See Fig \ref{fig:expand} as an example.


\item [Otherwise,] $g$ is a fan.
For each maximal path $P=(v_1, v_2, \cdots, v_k)\in \HH(v), v\in V(\HH)$
where the gadget $g_i$ of each $v_i=(l_i, g_i)$ is equal to $g$,
(1) merge $P$,
(2) add an arc $e'$ between $v_1$ and $v_k$
and (3) set $\HH(e')=(\beta(g)\cap \alpha(g^R), g^R)$
(the backbone graph of $g$)
where $(g, g^R)$ is a v-$\GG$-pair in $\hat{\VV}$.
The backbone graph of $g$ is embedded into $\HH(e')$.
Figs \ref{fig:merge} (1) and (2) show an example of $P$
before merging $P$ and
Figs \ref{fig:merge} (3) and (4) show an example of $P$
after merging $P$.
\end{description}
\end{enumerate}

\begin{figure}[t]
\centering
\includegraphics[width=0.75\textwidth, angle =0]{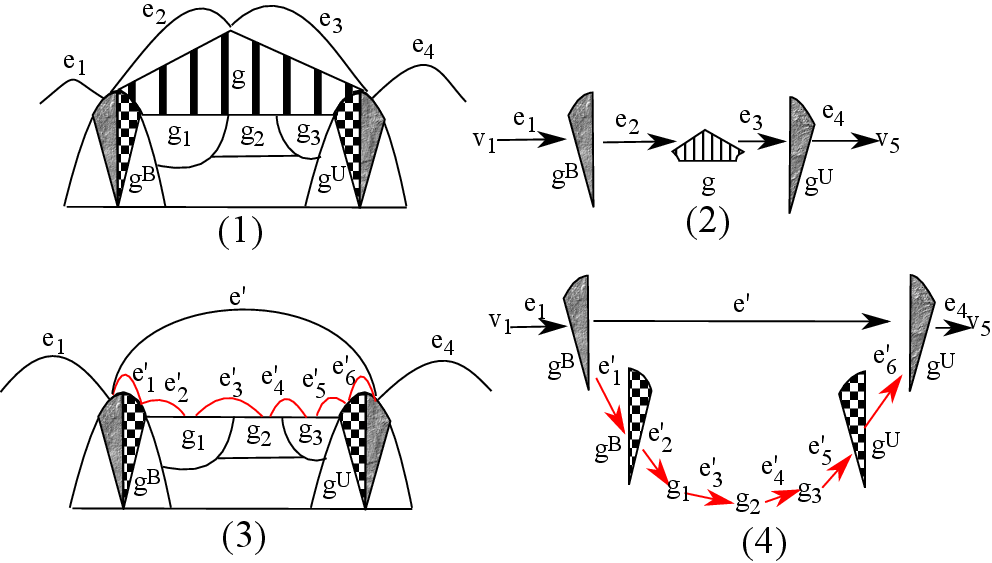}
\caption{(1) and (2) show a $\MM$-triple $(g^B, g, g^U)$ and
suppose that $\HH_C$ has a directed path $P=(\cdots,$ $e_1, e_2, e_3, e_4,$ $\cdots)$;
(3) and (4) show that after removing $g$, we add a new arc $e'$ into
$\HH_{e_C}$ and
$P$ becomes $(\cdots,$ $e_1, e', e_4,$ $\cdots)\in \HH_C$. And,
$(e'_1, e'_2,$ $e'_3, e'_4,$ $e'_5, e'_6)$ is a directed path in $\HH(e')$
where $(e'_1, e'_2,$ $e'_3, e'_4,$ $e'_5, e'_6)$
is a v-backbone of $L=\beta(g)$.
}
\label{fig:expand}
\end{figure}

\begin{figure}[t]
\centering
\includegraphics[scale=0.75, angle =0]{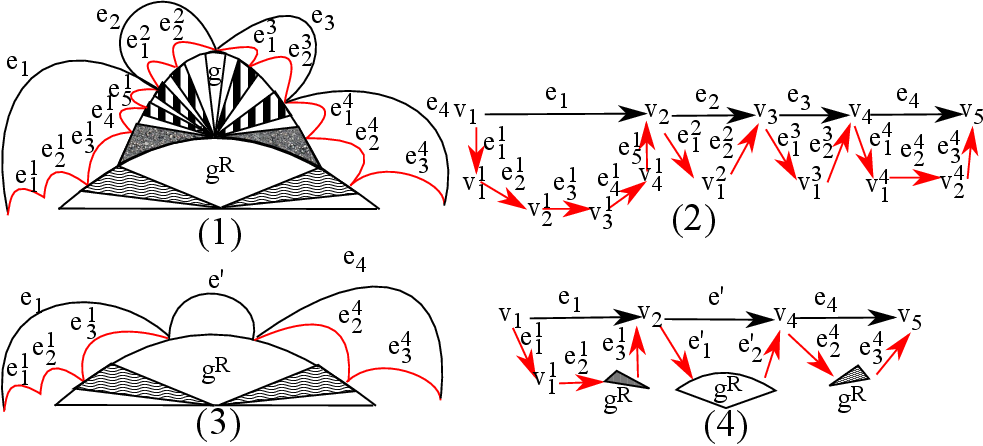}
\caption{
(1) and (2) are an example of $\HH$, complete arcs and partial arcs;
(3) and (4) are an example to explain how $\HH$ changes its structure after removing
a fan $g$;
(1) and (2): $(\cdots, e_1, e_2, e_3, e_4, \cdots)$ is a directed path $P\in \HH(e_C)$
where $e_1$ is a left partial arc on $(g, g^R)$,
$\{e_2, e_3\}$ are complete arcs on $(g, g^R)$ and
$e_4$ is a right partial arc on $(g, g^R)$.
Also, $(e^1_1, e^1_2,$ $e^1_3, e^1_4,$ $e^1_5)$
is a directed path in $\HH(e_1)$,
$(e^2_1, e^2_2)$ is a directed path in $\HH(e_2)$,
$(e^3_1, e^3_2)$ is a directed path in $\HH(e_3)$ and
$(e^4_1, e^4_2, e^4_3)$ is a directed path in $\HH(e_4)$;
(3) and (4): after removing the fan $g$,
change the vertices $v_2=(l_2, g)$ and $v_4=(l_4, g)$ to
$v_2=(\beta(g)\cap \alpha(g^R), g^R)$ and $v_4=(\beta(g)\cap \alpha(g^R), g^R)$, respectively
where $\beta(g) \cap \alpha(g^R)$ is the intersection of the back boundary $\beta(g)$
and the front boundary $\alpha(g^R)$,
and $(g, g^R)$ is a $\GG$-pair in $\tilde{\VV}$.
Also, the arcs $\{e_2, e_3\}$ are replaced by the arc $e'$
and $\HH(e')$ is the path $(v_2, (\beta(g)\cap \alpha(g^R), g^R), v_4)$.
}
\label{fig:merge}
\end{figure}

The next definition
defines a \emph{removable} gadget $g$
which can be selected in Algorithm \ref{alg:backtrack-conflict-gadgets}
and add the backbone graph $\HH_{L=\beta(g)}$ into $\HH$ .
Intuitively a \emph{removable} gadget $g$ means that
all gadgets $g'$ which are connections $(g', g)$ in $\hat{\VV}$
have been selected and removed from Algorithm \ref{alg:backtrack-conflict-gadgets}.

\begin{definition}\label{def:removable-gadget}
In Algorithm \ref{alg:backtrack-conflict-gadgets},
we say a gadget $g$ is \emph{removable} from a DAG $\HH(e)$ $V(\HH)$
if there exists a vertex $(l, g) \in \HH(e)$ and
we cannot find a vertex $(l', g')$ from another DAG $\HH(e')$ in $V(\HH)$
such that
$g$ and $g'$ belong to some connection $\Lambda\in \hat{\VV}$.
Note that the vertex $(l', g')$ can also be selected from $\HH(e)$.
\end{definition}

Now we give the definition of a \emph{conflicting} hierarchal v-chain $\JJ$ which
cannot form a slant $\REL$ $\R$. An example for
a \emph{conflicting} hierarchal v-chain $\JJ$ has been shown in Fig \ref{fig:conflict}.
\begin{definition}\label{def:non-conflicting-v-chain}
A hierarchal v-chain $\JJ$
is \emph{conflicting} on a gadget $g$ if
there exist pairs $(l, g)\in S$ and $(l', g')\in S'$ where
$S$ and $S'$ are two vertices in $V(\JJ)$
such that
\begin{enumerate}
\item if $g \neq g'$, $g$ and $g'$ overlap at least one face.
\item Otherwise ($g = g'$), $l$ and $l'$ overlap at least two vertices.
\end{enumerate}
Note that $S$ might be equal to $S'$.
Moreover, we say
the vertex $(l', g')$ is \emph{conflicting} to $(l, g)$ on $g$
if $(l, g)$ and $(l', g')$ satisfy one of the above two conditions.
On the other hand,
we say $(l', g')$ is \emph{compatible} to $(l, g)$ on $g$
if $(l', g')$ is not conflicting to $(l, g)$ on $g$.
And,
for a hierarchal v-chain $\JJ$, we say $\JJ$ is \emph{compatible} on $g$
if $\JJ$ is not conflicting on $g$.
\end{definition}

Next we can start to define that
$\HH$ is $\emph{compatible}$ on a gadget $g$
as follows:
\begin{definition}
Given a hierarchal structure $\HH=(V(\HH), E(\HH))$ and a gadget $g\in \tilde{\VV}$,
we say $\HH$ is \emph{compatible} on $g$ if
\begin{enumerate}
\item there exists a directed path $P\in \HH(e_C)$ such that
for each vertex $(l, g)\in P$,
each vertex $(l', g')\in P$ other than $(l, g)$
is compatible to $(l, g)$ on $g$. And,
\item for each arc $e\in P$,
the hierarchal substructure $\HH_e$ of $\HH$ is also compatible on $g$.
\end{enumerate}
We say (1) a directed path $P\in \HH(e_C)$ is \emph{compatible} on $g$
if $P$ satisfies the conditions 1 and 2. And,
(2) a directed path $P\in \HH(e_C)$ is \emph{conflicting} on $g$
if $P$ violates the condition 1 or the condition 2.
Moreover, an arc $e\in P$ is \emph{compatible}
on $g$ if
$\HH_e$ is compatible on $g$.
On the other hand, $e$ is \emph{conflicting} on $g$ if
$\HH_e$ is not compatible on $g$.
\end{definition}

From the above definition of a compatible hierarchal structure $\HH$,
we immediately have a recursive procedure to check whether
there exists a compatible path $P$ on $g$ in $\HH(e_C)$ as follows:
for each directed path $P\in \HH(e_C)$,
recursively check each arc $e\in P$ whether the DAG $\HH(e)\in \HH_e$
($\HH(e)$ is the root's associated DAG in $\HH_e$)
has a compatible directed path on $g$ or not.
Then $P$ is conflicting on $g$ if and only if $P$ becomes disconnected
after removing all conflicting arcs $e$ on $g$ from $\HH(e_C)$.
It is stated in Property \ref{prop:connected-v-chain}.

Briefly speaking, Algorithm \ref{alg:backtrack-conflict-gadgets} iteratively
removes the root $\JJ(r)$ of a conflicting hierarchal-v-chain $\JJ$ from $\HH(e_C)$.
Also, we utilize Algorithms \ref{alg:EXPAND} and
\ref{alg:MERGE} to adjust the structure of $\HH$.
Moreover, after recursively adjusting $\HH$
(it means that via adjusting $\HH(e_C)\in \HH$, we also
adjust the structure of $\HH_e, e\in \HH(e_C)$), we have the following fact:
for each $\HH(e)\in \HH$, if there does not exist
a directed path between $v_1$ and $v_2$ in $\HH(e)$ before removing $g$ from $\HH(e)$,
then $v_1$ remains disconnected to $v_2$ in $\HH(e)$ after removing $g$ from $\HH(e)$.
At the end of Algorithm \ref{alg:backtrack-conflict-gadgets},
we can conclude that
each connected path $P\in \HH(e_C)$
has a corresponding compatible hierarchal v-chain $\JJ$.
The intuitive meaning of a path $P\in \HH(e_C)$
keeps its connectivity after removing $g$
is that
$P$ can add $g$ into its corresponding hierarchal v-chain $\JJ$.

How to efficiently
check whether a directed path $P \in \HH(e_C)$ is compatible on $g$ or not?
We can recursively
check whether there exists a compatible directed path $P'\in \HH(e)$ on $g$
for each complete and partial arcs $e\in \HH(e_C)$.
In Observations \ref{obs:complete-edge} and \ref{obs:partial-edge},
we describe the recursive formulas to
check complete arcs and partial arcs
whether they are compatible on $g$ or not.
After we check all complete arcs and partial arcs,
we keep all compatible arcs on $g$ in $\HH(e_C)$ and
check whether there exists a directed path
from source to sink in $\HH(e_C)$.
(The root $\JJ(r)$ of a compatible hierarchal v-chain $\JJ$.)
The recursive procedures to check
directed paths, complete arcs and partial arcs on $g$ in $\HH(e_C)$
are described
in Lemmas \ref{lemma:proper-chain}, \ref{lemma:complete-edge} and \ref{lemma:partial-edge},
respectively.

\begin{algorithm}[t]
\caption{Find Conflicting Gadgets via Backtracking Algorithm}
\label{alg:backtrack-conflict-gadgets}

\KwIn{Sets $\tilde{\VV}$ and $\hat{\VV}$}

Add the backbone graph $\HH_{C=\beta(g_T)}$ of $g_T$ into $\HH$.
$V(\HH)=\{\HH(e_C)=\HH_C\}$\;

\While{the initial fan $g_0$ is not removable from $\HH(e_C)$}
{
Find a removable gadget $g$ from $\HH(e_C)$\;

\uIf {$g$ is a mirror fan}
{
\For{ each $\MM$-triple $(g^B, g, g^U) \in \hat{\VV}$}
{
EXPAND $g$ in $\HH$ by Algorithm \ref{alg:EXPAND}\;
}
}
\ElseIf{$g$ is a fan}
{
\For{ each $\GG$-pair $(g, g^R)$}
{
Recursively check whether
each complete arc and partial arc in $\HH(e_C)$ are compatible on $g$ or not
by Algorithm \ref{alg:MERGE}\;

}
Recursively delete all complete arcs on $g$ from $\HH(e_C)$\;
}

Remove $g$ from $\tilde{\VV}$ and
all connections $(g, g^R)$ and $(g^B, g, g^U)$ from $\hat{\VV}$\;

}

\uIf {there exists a path $P=(e_1, e_2, \cdots, e_k) \in \HH(e_C)$
where each $e_i, 1\leq i\leq k,$
is a complete arc on $g_0$}
{
$G$ have an area-universal rectangular layout\;
}
\Else
{
$G$ does not have any area-universal rectangular layout\;
}
\end{algorithm}

The main task for EXPAND operation in Algorithm \ref{alg:EXPAND} is to add the backbone graph of a mirror fan into $\HH$.
See Fig \ref{fig:expand} as an example for EXPAND operation.

\begin{algorithm}[t]
\caption{EXPAND a mirror fan in $\HH$}
\label{alg:EXPAND}

\KwIn {The hierarchal structure $\HH$ with the root $\HH(e_C)$ and
a $\MM$-triple $(g^B, g, g^U) \in \hat{\VV}$}

Add the backbone graph $\HH_{L=\beta(g)}$ into $\HH$ as a vertex $v\in V(\HH)$\;

\For{each DAG $\HH(e)\in V(\HH)$ such that
there is a subpath $(l_1, g^B)\xrightarrow{e_1} (\alpha(g), g) \xrightarrow{e_2} (l_2, g^U)$ in $\HH(e)$
where $l_1$ and $l_2$ are portions of $\alpha(g^B)$ and $\alpha(g^U)$, respectively}
{

Replace $(l_1, g^B)\xrightarrow{e_1} (\alpha(g), g) \xrightarrow{e_2} (l_2, g^U)$ by
$(l_1, g^B)\xrightarrow{e'} (l_2, g^U)$ in $\HH(e)$\;

Set $\HH(e') = \HH(v)$ and
add a super arc from $\HH(e)$ to $\HH(v)$ in $\HH$\;

Add an arc from the starting vertex of $e$ to the source vertex of $\HH(v)$ and
an arc from the sink vertex of $\HH(v)$ to the ending vertex of $e$ in $\HH(e)$\;

Remove the vertex $(\alpha(g), g)$ from $\HH(e)$\;
}
\end{algorithm}

In the following lemma,
we describe the recursive structure of a compatible directed path $P\in \HH(e_C)$.
A simple way to explain Lemma \ref{lemma:proper-chain} is that
to recursively check a compatible directed path $P$ in $\HH(e_C)$
is equal to,
for each arc $e\in P$,
recursively check whether there exists a compatible directed path $P'$ in $\HH(e)$.
In general, each directed path $P\in \HH(e_C)$ can be decomposed into five parts:
(1) the subpath from source which doesn't have any partial arc and complete arc on $g$,
(2) the subpath which only has a left partial arc on $g$,
(3) the subpath which only has complete arcs on $g$,
(4) the subpath which only has a right partial arc on $g$ and
(5) the subpath to sink which doesn't have any partial arc and complete arc on $g$.
Because each arc $e$ in a compatible directed path $P$
must be compatible on $g$ ,
it implies that
$\HH(e)$ must have at least one directed path from source to sink which is compatible on $g$.
We describe their recursive structures of complete arcs and partial arcs on $g$ in Observations
\ref{obs:complete-edge}, \ref{obs:minimal-complete-edge}, \ref{obs:partial-edge} and \ref{obs:minimal-partial-edge}.
From the above discussion, we immediately have Lemma \ref{lemma:proper-chain}.

\begin{lemma}\label{lemma:proper-chain}
Given a fan $g$,
suppose there is a directed path
$P=(e_1, e_2, e_z,$ $e^p, e^c_1, e^c_2,$ $\cdots, e^c_{k_c}, e^q,$
$e'_1, e'_2, \cdots, e'_{z'}) \in \HH(e_C)$
where the arcs $e^p$ and $e^q$ are the left and right partial arcs on $g$, respectively, and
each arc $e^c_i, 1\leq i\leq k_c,$ is a complete arc on $g$.
Then, $P$ is the root $\JJ(r)$ of a compatible hierarchal v-chain $\JJ$ on $g$ if and and if
\begin{itemize}
\item for each complete arc $e^c_i\in P, 1\leq i\leq k_c$,
there is a compatible directed path $P^c_i\in \HH(e^c_i)$ on $g$
(see Observations \ref{obs:complete-edge} and \ref{obs:minimal-complete-edge}
for more details of a complete arc),
\item for the left partial arc $e^p\in P$,
there is a compatible directed path $P^p\in \HH(e^p)$ on $g$
(see Observations \ref{obs:partial-edge} and \ref{obs:minimal-partial-edge}
for more details of a left partial arc) and
\item for the right partial arc $e^q\in P$,
there is a compatible directed path $P^q\in \HH(e^q)$ on $g$
(see Observations \ref{obs:partial-edge} and \ref{obs:minimal-partial-edge}
for similar details of a right partial arc).
\end{itemize}
\end{lemma}

Given a complete arc $e=(l_1, g)\rightarrow (l_2, g)\in \HH(e_C)$ on $g$,
each arc $e'\in P$ is also a complete arc on $g$.
And, we know that if we want to guarantee that a complete arc $e$ is compatible, we
must recursively check whether $\HH(e)$ can have a directed path which only consists of
compatible complete arcs on $g$.
Obviously, to recursively check a compatible complete arc on $g$
is a recursive procedure implemented by dynamic programming technique.
Also, the base case for the recursive procedure
is that a complete arc on $g$ whose two end vertices $(l_1, g)$ and $(l_2, g)$
have that $l_1\cup l_2$ is contiguous on the front boundary of $g$. It means that
$(l_1, g) \rightarrow (l_2, g)$ is compatible on $g$.
See Fig \ref{fig:merge} as an example of a complete arc.

From the above discussion, we can describe recursive structures of a compatible
complete arc on $g$ in Observations \ref{obs:complete-edge} and \ref{obs:minimal-complete-edge}:



\begin{observation}\label{obs:complete-edge}
Given a fan $g$, a complete arc
$e$ is compatible on $g$ if and only if
there exists a directed path $P=(e^c_1, e^c_2, \cdots, e^c_{k_c}) \in \HH(e)$
where
\begin{itemize}
\item the source vertex of $P$ is the starting vertex of $e$,
\item the sink vertex of $P$ is the ending vertex of $e$ and
\item each $e^c_i, 1\leq i\leq k_c,$ is a compatible complete arc on $g$.
\end{itemize}
\end{observation}

\begin{observation}\label{obs:minimal-complete-edge}
Given a fan $g$,
a minimal complete arc $e=(l_1, g)\rightarrow (l_2, g)$ is compatible on $g$
if and only if
$l_1\cup l_2$ is contiguous on the front boundary $\alpha(g)$ of $g$
(the last vertex of $l_1$ overlaps the first vertex of $l_2$).
\end{observation}

Based on Observations \ref{obs:complete-edge} and \ref{obs:minimal-complete-edge},
we can check a complete arc $e \in \HH(e_C)$ on $g$ whether it is compatible
on $g$ or not via Lemma \ref{lemma:complete-edge}.

\begin{lemma}\label{lemma:complete-edge}
Given a fan $g$,
we can recursively check whether each complete arc $e\in \HH(e_C)$ on $g$
satisfies structure described in Observations \ref{obs:complete-edge} and \ref{obs:minimal-complete-edge} as follows:
\begin{enumerate}
\item recursively check whether each arc $e'\in \HH(e)$ satisfies the structures in Observations \ref{obs:complete-edge}
and \ref{obs:minimal-complete-edge},
\item keep all arcs passing the above tests in $\HH(e)$, and
\item check whether $\HH(e)$ has a directed path from source to sink. If yes, keep $e$ in $\HH(e_C)$.
Otherwise, delete $e$ from $\HH(e_C)$.
\end{enumerate}
\end{lemma}

For a left partial arc $e=(l_1, g_1)\rightarrow (l_2, g_2)\in \HH(e_C)$ on $g$,
because $g_2$ is equal to $g$,
each directed path $P$ in $\HH(e)$ can be partitioned into
(1) the subpath that consists of neither complete arcs nor partial arcs on $g$,
(2) the left partial arc on $g$ and
(3) the subpath that only consists of complete arcs on $g$.
See Fig \ref{fig:merge} as an example of a left partial arc.
Similarly, to check a compatible left partial arc on $g$ is a recursive procedure
which can be implemented by dynamic programming technique.
Also, the base case for the recursive procedure
is a left partial arc $(l_1, g_1) \rightarrow (l_2, g_2=g)$ on $g$
which has (1) $(g_2=g, g_1)$ is a v-connection in $\hat{\VV}$ and
(2) $l_1\cup l_2$ is contiguous on the front boundary $\alpha(g_2, g_1)$ of the connection $(g_2, g_1)$.
It means that $(l_1, g_1) \rightarrow (l_2, g_2=g)$ is compatible on $g$.
See Fig \ref{fig:merge} for examples of a left partial arc and a minimal left partial arc.
From the above discussion, we can describe recursive structures of a compatible
left partial arc on $g$ in Observations \ref{obs:partial-edge} and \ref{obs:minimal-partial-edge}:

\begin{observation}\label{obs:partial-edge}
Given a fan $g$ and a left partial arc $e$ on $g$,
a left partial arc $e$ is
compatible on $g$ if and only if
there exists a directed path $P=(e_1, e_2, \cdots, e_z, e^p, e^c_1, e^c_2, \cdots, e^c_{k_c})$
in $\HH(e)$ where
\begin{itemize}
\item the source vertex of $P$ is the starting vertex of $e$,
\item the sink vertex of $P$ is the ending vertex of $e$,
\item for each $1\leq i\leq z$, $e_i=(l_i, g_i)\rightarrow (l_{i+1}, g_{i+1})$
is an arc where $g \neq g_i$ and $g\neq g_{i+1}$,
\item the arc $e^p$ is a compatible left partial arc on $g$, and
\item each arc $e^c_i, 1\leq i\leq k_c,$ is a compatible complete arc on $g$.
\end{itemize}
\end{observation}

\begin{observation}\label{obs:minimal-partial-edge}
Given a fan $g$,
a minimal left partial arc $e=(l_1, g_1)\rightarrow (l_2, g_2=g)$ on $g$
is a compatible left partial arc on $g$ if and only if
(1) $(g_2=g, g_1)$ is a connection in $\hat{\VV}$ and
(2) $l_1\cup l_2$ is contiguous on the front boundary $\alpha(g_2, g_1)$ of the connection $(g_2, g_1)$
(the last vertex of $l_1$ overlaps the first vertex of $l_2$).
\end{observation}

Based on Observations \ref{obs:partial-edge} and \ref{obs:minimal-partial-edge},
we can recursively check whether a partial arc $e \in \HH(e_C)$
is compatible on $g$ or not via Lemma \ref{lemma:partial-edge}.

\begin{lemma}\label{lemma:partial-edge}
Given a fan $g$,
we can recursively check whether a partial arc $e\in \HH(e_C)$ on $g$
satisfies the structures in Observations \ref{obs:partial-edge} and \ref{obs:minimal-partial-edge} as follows:
\begin{enumerate}
\item recursively check whether
each partial arc $e'\in \HH(e)$ on $g$
satisfies the structures in Observations \ref{obs:partial-edge} and \ref{obs:minimal-partial-edge},
\item recursively check whether
each complete arc $e'\in \HH(e)$ on $g$ satisfies the structures
in Observations \ref{obs:complete-edge} and \ref{obs:minimal-complete-edge},
\item keep all arcs passing the above tests in $\HH(e)$, and
\item check whether $\HH(e)$ has a directed path from source to sink. If yes, keep $e$ in $\HH(e_C)$.
Otherwise, delete $e$ from $\HH(e_C)$.
\end{enumerate}
\end{lemma}

There are three main tasks of MERGE operation in Algorithm \ref{alg:MERGE}.
The first one is to recursively check each complete arc on a removable gadget $g\in \tilde{\VV}$.
The second one is to recursively check each partial arc on $g$.
The final one is to remove $g$ and maintain the connectivity
for each compatible directed path
in $\HH(e_C)$.
What we do in the final for loop is to
reconnect a new arc between a left partial arc $e_L$ and a right partial arc $e_R$
if and only if there exists a compatible directed path from $e_L$ to $e_R$.
See Fig \ref{fig:merge} as an examples for Algorithm \ref{alg:MERGE}.

\begin{algorithm}[t]
\caption{MERGE $\HH$ via Dynamic Programming}
\label{alg:MERGE}

\KwIn{The hierarchal structure $\HH$ with the root $\HH(e_C)$ and a $\GG$-pair $(g, g^R)\in \hat{\VV}$}

\For {each complete arc $e^c\in \HH(e_C)$ on $g$}
{
Recursively check $e^c$
whether $e^c$ is compatible on $g$ or not
(this recursive check follows Lemma \ref{lemma:complete-edge})\;
}

\For {each partial arc $e^p\in \HH(e_C)$ on $g$}
{Recursively check $e^p$
whether $e^p$ is compatible on $g$ or not
(this recursive check follows Lemma \ref{lemma:partial-edge})\;
}

\For {each pair of left partial arc $e_L=v_a\rightarrow v_L=(v_L, g)$
and right partial arc $e_R=v_R=(v_R, g)\rightarrow v_b$ in $\HH(e_C)$
    such that $v_L$ and $v_R$ remain connected in $\HH(e_C)$}
{

      Change $v_L=(v_L, g)$ and $v_R=(l_R, g)$ to
      $v_L=(\beta(g)\cap \alpha(g^R), g^R)$ and
      $v_R=(\beta(g)\cap \alpha(g^R), g^R)$, respectively\;

      Add an arc $e'=(v_L\rightarrow v_R)$ into the graph $\HH(e_C)$ and
      let $\HH(e')$ be the directed path
      $(v_L \rightarrow (\beta(g)\cap \alpha(g^R), g^R) \rightarrow v_R)$\;
}

\end{algorithm}

Note that the connectivity between $v_L$ and $v_R$ is based on the arcs
which are compatible on $g$
in Algorithm \ref{alg:MERGE}.

There are two important properties for the correctness of Algorithm \ref{alg:backtrack-conflict-gadgets}.
The first one states that
we can eliminate each conflicting directed path $P$ (hierarchal v-chain $\JJ$) on $g$ via removing $g$ from $\HH(e_C)$.
The second one states that
the number of directed paths (hierarchal v-chains) decreases
during Algorithm \ref{alg:backtrack-conflict-gadgets} executes.

\begin{property}\label{prop:connected-v-chain}
For each DAG $\HH(e)\in V(\HH)$,
a directed path $P\in \HH(e)$
is conflicting on $g$ if and only if $P$ becomes disconnected after removing $g$ from $\HH(e)$.
\end{property}

\begin{property}\label{prop:dummy-v-chain}
For each DAG $\HH(e)\in V(\HH)$,
if any two vertices $u, v\in \HH(e)$ are
disconnected, then $u$ and $v$ remain disconnected
after removing $g$ from $\HH(e)$.
\end{property}

From the above two properties, we can see that
if we can recursively guaranteed that
for each compatible arc $e\in \HH(e_C)$ on $g$, $\HH(e)$
has Properties \ref{prop:connected-v-chain} and \ref{prop:dummy-v-chain},
then each compatible directed path $P \in \HH(e_C)$ on $g$
is also a compatible hierarchal v-chain on $g$.
And, in the final "for" loop of Algorithm \ref{alg:MERGE},
it connects a new arc between $v_L$ and $v_R$
if and only if there is a compatible directed path on $g$ from $v_L$ and $v_R$.
Hence
it guarantees that no pair of vertices $(v_L, v_R)$ turns into connected if
$v_L$ and $v_R$ are disconnected
before removing conflicting arcs on $g$.

\begin{theorem}\label{thm:backtracking}
Algorithm \ref{alg:backtrack-conflict-gadgets} can successfully
check whether there exists a directed path $P\in \HH(e_C)$
such that $P$ has a corresponding slant $\REL$.
\end{theorem}

\begin{proof}

The correctness of Algorithm \ref{alg:backtrack-conflict-gadgets}
is based on Properties \ref{prop:connected-v-chain} and
\ref{prop:dummy-v-chain}.
Clearly, from Property \ref{prop:connected-v-chain},
each conflicting directed path $P\in \HH(e_C)$
(the root $\JJ(r)$ of each hierarchal v-chain $\JJ$) on $g$
becomes disconnected after removing a removable gadget $g$
and from Property \ref{prop:dummy-v-chain},
it remains disconnected in the following steps.
Hence in the final step of Algorithm \ref{alg:backtrack-conflict-gadgets},
each connected directed path $P\in \HH(e_C)$ has been proven that
$P$ corresponds to the root $\JJ(r)$ of a compatible hierarchal v-chain on $g$
for every gadget $g\in \tilde{\VV}$.
Also,
if a directed path $P\in \HH(e_C)$, corresponds to the root $\JJ(r)$
of a hierarchal v-chain $\JJ$, keeps its connectivity
after recursively removing a gadget $g$ from $\HH(e_C)$, then
this hierarchal v-chain $\JJ$ can add the gadget $g$ into its corresponding slant $\REL$.
Hence we can have that each connected directed path in $\HH(e_C)$
has a corresponding slant $\REL$ $\R$.

\end{proof}

Theorem \ref{thm:backtracking} has proven that
we can backtrack all directed paths $P\in \HH(e_C)$
to know whether $P$ represents
the chain of a partial slant $\REL$ $\R$.

\begin{theorem}\label{thm:backtracking-time}
The time complexity of Algorithm \ref{alg:backtrack-conflict-gadgets} is polynomial bound.
\end{theorem}

\begin{proof}

Let $\KK$ be the number of iterations in Algorithm \ref{alg:backtrack-conflict-gadgets}
and $N_i, 1\leq i\leq \KK$, be the size of $\HH$ in the $i$-th iteration.
The time analysis of backtracking is based on three parts:
(1) $\KK$ is polynomial bound,
(2) each $N_i, 1\leq i\leq \KK$, is polynomial bound and
(3) time complexity $T(N_i)$ in each $i$-th iteration is polynomial bound.

Obviously, the number $\KK$ of total iterations is bounded by the number of connections $\hat{\VV}$.
By Lemma \ref{lemma:number}, the number of gadgets in $G$ is at most $N=O(n^2)$ and the number
of connections in $G$ is at most $O(N^3)=O(n^6)$.
Hence $\KK$ polynomially grows with respect to the number of $G$'s vertices $n$.

For each $i$-th iteration, we either execute EXPAND or MERGE operations to adjust $\HH$' structure.
When Algorithm \ref{alg:backtrack-conflict-gadgets} executes EXPAND operation on a removable gadget $g$ with the back boundary $L=\beta(g)$,
we add $g$'s backbone graph $\HH_L$ into $\HH$ where $\HH_L$'s size (the number of vertices
in $\HH_L$) is polynomial bound.
Since the number $\KK$ of total iterations is polynomial bound,
the total vertices added into $\HH$ in all EXPAND operations are bounded
by the summation of all backbone graphs's size.
Hence, the summation of all backbone graphs's size is polynomial bound.

When Algorithm \ref{alg:backtrack-conflict-gadgets} executes MERGE operation
on a removable gadget $g$ with a $\GG$-pair $(g, g^R)$,
we replace each maximal compatible directed path $P$ on $g$ in $\HH$ by
an arc $e'=(g_u, l_u)\rightarrow (g_v, l_v)$ between the two end vertices of $P$ where
each arc in $P$ is a complete arc on $g$, and
add $\HH(e')$ into $\HH$ where $|\HH(e')|$ consists of the newly-added vertex $(g^R, l)$.
Note that the added vertex $(g^R, l)$ only connects to the two end vertices $(g_u, l_u)$ and $(g_v, l_v)$ of $e$,
and cannot be connected to other vertices in $\HH$ in following iterations.
Also, $(g^R, l)$ is removed from $\HH$ when removing a removable gadget $g^R$ from $\HH$.
Hence the total number of newly-added arcs $e'$ are summation of all backbone graphs's size and
it is polynomial bound.
Also, the size of all added $\HH(e')$ is polynomial bound.

Because the total vertices added into $\HH$ are polynomial bound during each iteration
and the number $\KK$ of iterations is polynomial bound,
the maximum of $\HH$'s size is polynomial bound of each iteration
in Algorithm \ref{alg:backtrack-conflict-gadgets}.
Hence each number $N_i, i\geq 1$, of vertices of $\HH$ in
each $i$-th iteration is polynomial bound.

Now we analyze the time complexity of each iteration.
When Algorithm \ref{alg:backtrack-conflict-gadgets} executes EXPAND operation on a removable gadget $g$ with the back boundary $L=\beta(g)$,
we take polynomial time to add the $g$'s backbone graph $\HH_L$ into $\HH$
because $\HH_L$'s size is polynomial bound.

When Algorithm \ref{alg:backtrack-conflict-gadgets} executes MERGE operation on a removable gadget $g$,
the tasks of MERGE operation consist of
(1) recursively checking each complete arc on $g$ in $\HH$
by Lemma \ref{lemma:complete-edge},
(2) recursively checking each partial arc on $g$ in $\HH$ by Lemma \ref{lemma:partial-edge} and
(3) check that for each DAG $\HH(e)$,
whether there exists a directed path
from the source to the sink in $\HH(e)$
after removing all conflicting complete and partial arcs from $\HH(e)$.

The total work of a MERGE operation can be simply described as follows:
check each DAG $\HH(e)\in V(\HH)$ whether $\HH(e)$ has at least one connected
directed path from the source to the sink in $\HH(e)$.
And, it can be done by executing a breadth-first search in $\HH(e)$
since $\HH(e)$ is a DAG. Hence the complexity of the total work of a MERGE operation
is polynomial bound as $\mbox{poly}(\max_{1\leq i\leq \KK}N_i)$.
Note that the order to check each DAG $\HH(e)\in V(\HH)$
is a bottom-up traversal in $\HH$ as follows:
a DAG $\HH(e)\in V(\HH)$ is ready to check if and only if
every DAG $\HH(e'), e'\in E(\HH(e))$, has been checked.


Because each iteration takes polynomial time as $T(N_i)\leq \mbox{poly}(\max_{1\leq i\leq \KK}N_i)$,
the total time complexity of all $\KK$ iterations in Algorithm \ref{alg:backtrack-conflict-gadgets}
is also polynomial bound.

\end{proof}

Theorem \ref{thm:backtracking-time} has proven that
the time complexity of Algorithm \ref{alg:backtrack-conflict-gadgets}
is polynomial bound.
\clearpage

\bibliographystyle{plain}

\end{document}